\documentclass[runningheads]{llncs}
\usepackage[T1]{fontenc}
\usepackage{amsfonts}
\usepackage{amsmath}
\usepackage{amssymb}
\usepackage{appendix}
\usepackage{cancel}
\usepackage{hyperref}
\usepackage{tabularx}

\usepackage{graphicx}
\usepackage{booktabs}
\usepackage{algorithm}
\usepackage{algpseudocode}
\graphicspath{{figures/}}
\usepackage{color}

\usepackage{tikz}
\usetikzlibrary{positioning,calc}
\tikzset{
    factor/.style={draw, fill=black, minimum size=3.5mm, inner sep=0pt},
    equality/.style={draw, fill=white, minimum size=3.5mm, inner sep=0pt, font=\scriptsize},
    edgelabel/.style={font=\scriptsize, inner sep=1pt}
}
\usepackage{pgfplots}
\usepackage{circuitikz}
\usepackage{tikz-cd}
\usetikzlibrary{intersections}
\usepackage{bm}
\usetikzlibrary{positioning}
\usepgfplotslibrary{fillbetween}
\pgfplotsset{compat=1.5}
\usetikzlibrary{pgfplots.groupplots}
\usetikzlibrary{shapes.geometric,backgrounds}

\ctikzset{bipoles/length=.6cm}

\usepackage{tikz}
\usepackage{xifthen}

\pgfdeclarelayer{bg}    
\pgfsetlayers{bg,main}  
\definecolor{beige}{RGB}{245, 245, 220}

\definecolor{darkgrey}{RGB}{75, 75, 75}
\definecolor{lightgrey}{RGB}{250, 250, 250}

\usetikzlibrary{calc, arrows, fit, positioning, patterns, decorations.pathreplacing, shapes}
\tikzstyle{dash} = [dashed, -latex,>=latex]
\tikzstyle{line} = [draw, -latex,>=latex]
\tikzstyle{smallbox} = [draw, minimum size=5.0mm]
\tikzstyle{box} = [draw, minimum size=7.0mm]
\tikzstyle{bigbox} = [draw, minimum size=10.0mm]
\tikzstyle{rectangle} = [draw, minimum width=10.0mm, minimum height=20.0mm]
\tikzstyle{switch} = [trapezium, trapezium angle=120, draw, rotate=90,  inner ysep=5pt, outer sep=5pt,
minimum height=7mm, minimum width=7mm]
\tikzstyle{roundbox} = [draw, circle, inner sep=0pt, minimum size=3mm]
\tikzstyle{clamped} = [draw, fill=darkgrey, minimum size=0.15cm]
\tikzstyle{msgcircle} = [shape=circle, draw, inner sep=0pt, minimum size=4mm, fill=white, font=\scriptsize]
\tikzstyle{darkmsgcircle} = [shape=circle, draw, inner sep=0pt, minimum size=4mm, fill=darkgrey, text=white, font=\scriptsize]
\tikzstyle{redmsgcircle} = [shape=circle, draw=red, inner sep=0pt, minimum size=4mm, text=red, font=\scriptsize]
\tikzstyle{reddarkmsgcircle} = [shape=circle, draw=red, inner sep=0pt, minimum size=4mm, fill=red, text=white, font=\scriptsize]
\tikzstyle{msgdoublecircle} = [shape=circle, double, double distance=1.5pt, draw, inner sep=0pt, minimum size=5mm, fill=white]
\tikzstyle{darkmsgdoublecircle} = [shape=circle, double, double distance=1.5pt, draw, inner sep=0pt, minimum size=5mm, fill=darkgrey, text=white, font=\bfseries]

\newcommand{\msg}[6]{
      \ifthenelse{\isin{#1}{left} \AND \isin{#2}{down}}{
            \coordinate (anchor) at ($({#3})!{#5}!({#4})$);
            \node[xshift=-6.0mm] at (anchor) {#6};
            \node[xshift=-1.0mm] at (anchor) {$\downarrow$};
      }{}
      \ifthenelse{\isin{#1}{right} \AND \isin{#2}{down}}{
            \coordinate (anchor) at ($({#3})!{#5}!({#4})$);
            \node[xshift=6.0mm] at (anchor) {#6};
            \node[xshift=1.0mm] at (anchor) {$\downarrow$};
      }{}

      \ifthenelse{\isin{#1}{down} \AND \isin{#2}{right}}{
            \coordinate (anchor) at ($({#3})!{#5}!({#4})$);
            \node[ yshift=-4.0mm] at (anchor) {#6};
            \node[yshift=-1.0mm] at (anchor) {$\rightarrow$};
      }{}
      \ifthenelse{\isin{#1}{up} \AND \isin{#2}{right}}{
            \coordinate (anchor) at ($({#3})!{#5}!({#4})$);
            \node[ yshift=4.0mm] at (anchor) {#6};
            \node[yshift=1.0mm] at (anchor) {$\rightarrow$};
      }{}

      \ifthenelse{\isin{#1}{down} \AND \isin{#2}{left}}{
            \coordinate (anchor) at ($({#3})!{#5}!({#4})$);
            \node[ yshift=-4.0mm] at (anchor) {#6};
            \node[yshift=-1.0mm] at (anchor) {$\leftarrow$};
      }{}
      \ifthenelse{\isin{#1}{up} \AND \isin{#2}{left}}{
            \coordinate (anchor) at ($({#3})!{#5}!({#4})$);
            \node[ yshift=4.0mm] at (anchor) {#6};
            \node[yshift=1.0mm] at (anchor) {$\leftarrow$};
      }{}

      \ifthenelse{\isin{#1}{left} \AND \isin{#2}{up}}{
            \coordinate (anchor) at ($({#3})!{#5}!({#4})$);
            \node[ xshift=-6.0mm] at (anchor) {#6};
            \node[xshift=-1.0mm] at (anchor) {$\uparrow$};
      }{}
      \ifthenelse{\isin{#1}{right} \AND \isin{#2}{up}}{
            \coordinate (anchor) at ($({#3})!{#5}!({#4})$);
            \node[ xshift=6.0mm] at (anchor) {#6};
            \node[xshift=1.0mm] at (anchor) {$\uparrow$};
      }{}
}

\newcommand{\msgcircle}[6]{
      \ifthenelse{\isin{#1}{left} \AND \isin{#2}{down}}{
            \coordinate (anchor) at ($({#3})!{#5}!({#4})$);
            \node[msgcircle,xshift=-5.0mm] at (anchor) {#6};
            \node[xshift=-1.5mm] at (anchor) {$\downarrow$};
      }{}
      \ifthenelse{\isin{#1}{right} \AND \isin{#2}{down}}{
            \coordinate (anchor) at ($({#3})!{#5}!({#4})$);
            \node[msgcircle,xshift=5.0mm] at (anchor) {#6};
            \node[xshift=1.5mm] at (anchor) {$\downarrow$};
      }{}

      \ifthenelse{\isin{#1}{down} \AND \isin{#2}{right}}{
            \coordinate (anchor) at ($({#3})!{#5}!({#4})$);
            \node[msgcircle, yshift=-5.0mm] at (anchor) {#6};
            \node[yshift=-2.0mm] at (anchor) {$\rightarrow$};
      }{}
      \ifthenelse{\isin{#1}{up} \AND \isin{#2}{right}}{
            \coordinate (anchor) at ($({#3})!{#5}!({#4})$);
            \node[msgcircle, yshift=5.0mm] at (anchor) {#6};
            \node[yshift=2.0mm] at (anchor) {$\rightarrow$};
      }{}

      \ifthenelse{\isin{#1}{down} \AND \isin{#2}{left}}{
            \coordinate (anchor) at ($({#3})!{#5}!({#4})$);
            \node[msgcircle, yshift=-5.0mm] at (anchor) {#6};
            \node[yshift=-2.0mm] at (anchor) {$\leftarrow$};
      }{}
      \ifthenelse{\isin{#1}{up} \AND \isin{#2}{left}}{
            \coordinate (anchor) at ($({#3})!{#5}!({#4})$);
            \node[msgcircle, yshift=5.0mm] at (anchor) {#6};
            \node[yshift=2.0mm] at (anchor) {$\leftarrow$};
      }{}

      \ifthenelse{\isin{#1}{left} \AND \isin{#2}{up}}{
            \coordinate (anchor) at ($({#3})!{#5}!({#4})$);
            \node[msgcircle, xshift=-5.0mm] at (anchor) {#6};
            \node[xshift=-1.5mm] at (anchor) {$\uparrow$};
      }{}
      \ifthenelse{\isin{#1}{right} \AND \isin{#2}{up}}{
            \coordinate (anchor) at ($({#3})!{#5}!({#4})$);
            \node[msgcircle, xshift=5.0mm] at (anchor) {#6};
            \node[xshift=1.5mm] at (anchor) {$\uparrow$};
      }{}
}

\newcommand{\darkmsg}[6]{
      \ifthenelse{\isin{#1}{left} \AND \isin{#2}{down}}{
            \coordinate (anchor) at ($({#3})!{#5}!({#4})$);
            \node[darkmsgcircle, xshift=-5mm] at (anchor) {#6};
            \node[xshift=-1.5mm] at (anchor) {$\downarrow$};
      }{}
      \ifthenelse{\isin{#1}{right} \AND \isin{#2}{down}}{
            \coordinate (anchor) at ($({#3})!{#5}!({#4})$);
            \node[darkmsgcircle, xshift=5mm] at (anchor) {#6};
            \node[xshift=1.5mm] at (anchor) {$\downarrow$};
      }{}

      \ifthenelse{\isin{#1}{down} \AND \isin{#2}{right}}{
            \coordinate (anchor) at ($({#3})!{#5}!({#4})$);
            \node[darkmsgcircle, yshift=-5.0mm] at (anchor) {#6};
            \node[yshift=-2.0mm] at (anchor) {$\rightarrow$};
      }{}
      \ifthenelse{\isin{#1}{up} \AND \isin{#2}{right}}{
            \coordinate (anchor) at ($({#3})!{#5}!({#4})$);
            \node[darkmsgcircle, yshift=5.0mm] at (anchor) {#6};
            \node[yshift=2.0mm] at (anchor) {$\rightarrow$};
      }{}

      \ifthenelse{\isin{#1}{down} \AND \isin{#2}{left}}{
            \coordinate (anchor) at ($({#3})!{#5}!({#4})$);
            \node[darkmsgcircle, yshift=-5.0mm] at (anchor) {#6};
            \node[yshift=-2.0mm] at (anchor) {$\leftarrow$};
      }{}
      \ifthenelse{\isin{#1}{up} \AND \isin{#2}{left}}{
            \coordinate (anchor) at ($({#3})!{#5}!({#4})$);
            \node[darkmsgcircle, yshift=5.0mm] at (anchor) {#6};
            \node[yshift=2.0mm] at (anchor) {$\leftarrow$};
      }{}

      \ifthenelse{\isin{#1}{left} \AND \isin{#2}{up}}{
            \coordinate (anchor) at ($({#3})!{#5}!({#4})$);
            \node[darkmsgcircle, xshift=-5.0mm] at (anchor) {#6};
            \node[xshift=-1.5mm] at (anchor) {$\uparrow$};
      }{}
      \ifthenelse{\isin{#1}{right} \AND \isin{#2}{up}}{
            \coordinate (anchor) at ($({#3})!{#5}!({#4})$);
            \node[darkmsgcircle, xshift=5.0mm] at (anchor) {#6};
            \node[xshift=1.5mm] at (anchor) {$\uparrow$};
      }{}
}

\newcommand{\redbackmsg}[6]{
      \ifthenelse{\isin{#1}{left} \AND \isin{#2}{down}}{
            \coordinate (anchor) at ($({#3})!{#5}!({#4})$);
            \node[reddarkmsgcircle, xshift=-5mm] at (anchor) {#6};
            \node[xshift=-1.5mm] at (anchor) {$\downarrow$};
      }{}
      \ifthenelse{\isin{#1}{right} \AND \isin{#2}{down}}{
            \coordinate (anchor) at ($({#3})!{#5}!({#4})$);
            \node[reddarkmsgcircle, xshift=5mm] at (anchor) {#6};
            \node[xshift=1.5mm] at (anchor) {$\downarrow$};
      }{}

      \ifthenelse{\isin{#1}{down} \AND \isin{#2}{right}}{
            \coordinate (anchor) at ($({#3})!{#5}!({#4})$);
            \node[reddarkmsgcircle, yshift=-5.0mm] at (anchor) {#6};
            \node[yshift=-2.0mm] at (anchor) {$\rightarrow$};
      }{}
      \ifthenelse{\isin{#1}{up} \AND \isin{#2}{right}}{
            \coordinate (anchor) at ($({#3})!{#5}!({#4})$);
            \node[reddarkmsgcircle, yshift=5.0mm] at (anchor) {#6};
            \node[yshift=2.0mm] at (anchor) {$\rightarrow$};
      }{}

      \ifthenelse{\isin{#1}{down} \AND \isin{#2}{left}}{
            \coordinate (anchor) at ($({#3})!{#5}!({#4})$);
            \node[reddarkmsgcircle, yshift=-5.0mm] at (anchor) {#6};
            \node[yshift=-2.0mm] at (anchor) {$\leftarrow$};
      }{}
      \ifthenelse{\isin{#1}{up} \AND \isin{#2}{left}}{
            \coordinate (anchor) at ($({#3})!{#5}!({#4})$);
            \node[reddarkmsgcircle, yshift=5.0mm] at (anchor) {#6};
            \node[yshift=2.0mm] at (anchor) {$\leftarrow$};
      }{}

      \ifthenelse{\isin{#1}{left} \AND \isin{#2}{up}}{
            \coordinate (anchor) at ($({#3})!{#5}!({#4})$);
            \node[reddarkmsgcircle, xshift=-5.0mm] at (anchor) {#6};
            \node[xshift=-1.5mm] at (anchor) {$\uparrow$};
      }{}
      \ifthenelse{\isin{#1}{right} \AND \isin{#2}{up}}{
            \coordinate (anchor) at ($({#3})!{#5}!({#4})$);
            \node[reddarkmsgcircle, xshift=5.0mm] at (anchor) {#6};
            \node[xshift=1.5mm] at (anchor) {$\uparrow$};
      }{}
}

\newcommand{\redmsg}[6]{
      \ifthenelse{\isin{#1}{left} \AND \isin{#2}{down}}{
            \coordinate (anchor) at ($({#3})!{#5}!({#4})$);
            \node[redmsgcircle, xshift=-5mm] at (anchor) {#6};
            \node[xshift=-1.5mm] at (anchor) {$\downarrow$};
      }{}
      \ifthenelse{\isin{#1}{right} \AND \isin{#2}{down}}{
            \coordinate (anchor) at ($({#3})!{#5}!({#4})$);
            \node[redmsgcircle, xshift=5mm] at (anchor) {#6};
            \node[xshift=1.5mm] at (anchor) {$\downarrow$};
      }{}

      \ifthenelse{\isin{#1}{down} \AND \isin{#2}{right}}{
            \coordinate (anchor) at ($({#3})!{#5}!({#4})$);
            \node[redmsgcircle, yshift=-5.0mm] at (anchor) {#6};
            \node[yshift=-2.0mm] at (anchor) {$\rightarrow$};
      }{}
      \ifthenelse{\isin{#1}{up} \AND \isin{#2}{right}}{
            \coordinate (anchor) at ($({#3})!{#5}!({#4})$);
            \node[redmsgcircle, yshift=5.0mm] at (anchor) {#6};
            \node[yshift=2.0mm] at (anchor) {$\rightarrow$};
      }{}

      \ifthenelse{\isin{#1}{down} \AND \isin{#2}{left}}{
            \coordinate (anchor) at ($({#3})!{#5}!({#4})$);
            \node[redmsgcircle, yshift=-5.0mm] at (anchor) {#6};
            \node[yshift=-2.0mm] at (anchor) {$\leftarrow$};
      }{}
      \ifthenelse{\isin{#1}{up} \AND \isin{#2}{left}}{
            \coordinate (anchor) at ($({#3})!{#5}!({#4})$);
            \node[redmsgcircle, yshift=5.0mm] at (anchor) {#6};
            \node[yshift=2.0mm] at (anchor) {$\leftarrow$};
      }{}

      \ifthenelse{\isin{#1}{left} \AND \isin{#2}{up}}{
            \coordinate (anchor) at ($({#3})!{#5}!({#4})$);
            \node[redmsgcircle, xshift=-5.0mm] at (anchor) {#6};
            \node[xshift=-1.5mm] at (anchor) {$\uparrow$};
      }{}
      \ifthenelse{\isin{#1}{right} \AND \isin{#2}{up}}{
            \coordinate (anchor) at ($({#3})!{#5}!({#4})$);
            \node[redmsgcircle, xshift=5.0mm] at (anchor) {#6};
            \node[xshift=1.5mm] at (anchor) {$\uparrow$};
      }{}
}

\newcommand{\bwmsg}[6]{
      \ifthenelse{\isin{#1}{left} \AND \isin{#2}{down}}{
            \coordinate (anchor) at ($({#3})!{#5}!({#4})$);
            \node[msgdoublecircle, xshift=-5.5mm] at (anchor) {#6};
            \node[xshift=-1.5mm] at (anchor) {$\downarrow$};
      }{}
      \ifthenelse{\isin{#1}{right} \AND \isin{#2}{down}}{
            \coordinate (anchor) at ($({#3})!{#5}!({#4})$);
            \node[msgdoublecircle, xshift=5.5mm] at (anchor) {#6};
            \node[xshift=1.5mm] at (anchor) {$\downarrow$};
      }{}

      \ifthenelse{\isin{#1}{down} \AND \isin{#2}{right}}{
            \coordinate (anchor) at ($({#3})!{#5}!({#4})$);
            \node[msgdoublecircle, yshift=-6.0mm] at (anchor) {#6};
            \node[yshift=-2.0mm] at (anchor) {$\rightarrow$};
      }{}
      \ifthenelse{\isin{#1}{up} \AND \isin{#2}{right}}{
            \coordinate (anchor) at ($({#3})!{#5}!({#4})$);
            \node[msgdoublecircle, yshift=6.0mm] at (anchor) {#6};
            \node[yshift=2.0mm] at (anchor) {$\rightarrow$};
      }{}

      \ifthenelse{\isin{#1}{down} \AND \isin{#2}{left}}{
            \coordinate (anchor) at ($({#3})!{#5}!({#4})$);
            \node[msgdoublecircle, yshift=-6.0mm] at (anchor) {#6};
            \node[yshift=-2.0mm] at (anchor) {$\leftarrow$};
      }{}
      \ifthenelse{\isin{#1}{up} \AND \isin{#2}{left}}{
            \coordinate (anchor) at ($({#3})!{#5}!({#4})$);
            \node[msgdoublecircle, yshift=6.0mm] at (anchor) {#6};
            \node[yshift=2.0mm] at (anchor) {$\leftarrow$};
      }{}

      \ifthenelse{\isin{#1}{left} \AND \isin{#2}{up}}{
            \coordinate (anchor) at ($({#3})!{#5}!({#4})$);
            \node[msgdoublecircle, xshift=-5.5mm] at (anchor) {#6};
            \node[xshift=-1.5mm] at (anchor) {$\uparrow$};
      }{}
      \ifthenelse{\isin{#1}{right} \AND \isin{#2}{up}}{
            \coordinate (anchor) at ($({#3})!{#5}!({#4})$);
            \node[msgdoublecircle, xshift=5.5mm] at (anchor) {#6};
            \node[xshift=1.5mm] at (anchor) {$\uparrow$};
      }{}
}

\newcommand{\bwdarkmsg}[6]{
      \ifthenelse{\isin{#1}{left} \AND \isin{#2}{down}}{
            \coordinate (anchor) at ($({#3})!{#5}!({#4})$);
            \node[darkmsgdoublecircle, xshift=-5.5mm] at (anchor) {#6};
            \node[xshift=-1.5mm] at (anchor) {$\downarrow$};
      }{}
      \ifthenelse{\isin{#1}{right} \AND \isin{#2}{down}}{
            \coordinate (anchor) at ($({#3})!{#5}!({#4})$);
            \node[darkmsgdoublecircle, xshift=5.5mm] at (anchor) {#6};
            \node[xshift=1.5mm] at (anchor) {$\downarrow$};
      }{}

      \ifthenelse{\isin{#1}{down} \AND \isin{#2}{right}}{
            \coordinate (anchor) at ($({#3})!{#5}!({#4})$);
            \node[darkmsgdoublecircle, yshift=-6.0mm] at (anchor) {#6};
            \node[yshift=-2.0mm] at (anchor) {$\rightarrow$};
      }{}
      \ifthenelse{\isin{#1}{up} \AND \isin{#2}{right}}{
            \coordinate (anchor) at ($({#3})!{#5}!({#4})$);
            \node[darkmsgdoublecircle, yshift=6.0mm] at (anchor) {#6};
            \node[yshift=2.0mm] at (anchor) {$\rightarrow$};
      }{}

      \ifthenelse{\isin{#1}{down} \AND \isin{#2}{left}}{
            \coordinate (anchor) at ($({#3})!{#5}!({#4})$);
            \node[darkmsgdoublecircle, yshift=-6.0mm] at (anchor) {#6};
            \node[yshift=-2.0mm] at (anchor) {$\leftarrow$};
      }{}
      \ifthenelse{\isin{#1}{up} \AND \isin{#2}{left}}{
            \coordinate (anchor) at ($({#3})!{#5}!({#4})$);
            \node[darkmsgdoublecircle, yshift=6.0mm] at (anchor) {#6};
            \node[yshift=2.0mm] at (anchor) {$\leftarrow$};
      }{}

      \ifthenelse{\isin{#1}{left} \AND \isin{#2}{up}}{
            \coordinate (anchor) at ($({#3})!{#5}!({#4})$);
            \node[darkmsgdoublecircle, xshift=-5.5mm] at (anchor) {#6};
            \node[xshift=-1.5mm] at (anchor) {$\uparrow$};
      }{}
      \ifthenelse{\isin{#1}{right} \AND \isin{#2}{up}}{
            \coordinate (anchor) at ($({#3})!{#5}!({#4})$);
            \node[darkmsgdoublecircle, xshift=5.5mm] at (anchor) {#6};
            \node[xshift=1.5mm] at (anchor) {$\uparrow$};
      }{}
}

\tikzset{mainstyle/.style={fill=white, draw=black, shape=rectangle, align=center}}

\tikzset{dstyle/.style={mainstyle, minimum size=4mm, inner sep=0pt, text width=4mm}}

\tikzset{sstyle/.style={mainstyle, minimum size=5mm, inner sep=0pt, text width=5mm}}

\tikzset{ostyle/.style={fill=darkgrey, draw=black, shape=rectangle, minimum size=0.2cm, inner sep=0pt, text width=2mm}}

\tikzstyle{observation}=[ostyle]
\tikzstyle{deterministic}=[dstyle]
\tikzstyle{stochastic}=[sstyle]

\tikzstyle{filter}=[mainstyle, minimum width=1cm, minimum height=0.5cm]
\tikzstyle{selector}=[fill=white, draw=black, shape=trapezium, rotate=180, minimum width=1cm, minimum height=0.5cm]

\def\sfy{\mathsf{y}}
\def\sfx{\mathsf{x}}
\def\sfu{\mathsf{u}}
\def\sfg{\mathsf{g}}
\def\rmd{\mathrm{d}}

\def\-{\text{-}}
\def\+{\text{+}}
\def\tm{\! - \!}

\newcommand{\refappx}[1]{\hyperref[#1]{Appendix~\ref*{#1}}}
\newcommand\given[1][]{\:#1\vert\:}

\newcommand{\E}{\mathbb{E}}

\begin{document}
\title{Expected free energy as an information constraint on the Bethe Lagrangian}
%
%
\author{Wouter M. Kouw}
\authorrunning{W.M. Kouw}
%
\institute{TU Eindhoven, Eindhoven, Netherlands\\
\email{w.m.kouw@tue.nl}
}
\maketitle              
\begin{abstract}
Active inference selects actions by minimising an \emph{expected} free energy functional over predicted futures. However, adding an expectation over yet-unobserved outcomes means the free energy functional no longer has a Kullback--Leibler structure, which hinders message passing treatments of inference procedures. We propose an alternative formulation based on a Bethe free energy functional, fully supporting inference by message passing. The epistemic drive is maintained by imposing an information constraint, next to normalisation, marginalisation and form constraints, insisting that the mutual information between future observations, states and parameters given actions must be at least as large as the entropy of the goal prior. For a specific value of the corresponding Karush--Kuhn--Tucker multiplier, the stationary point of this constrained Bethe Lagrangian recovers the expected free energy solution. We show that, as the information demand is varied, the solved multiplier moves through its inactive, interior, and saturated regimes. In the inactive regime the agent's epistemic drive switches off entirely, while in the saturated regime it is maximal. We compare the performance of the constrained Bethe agent on three tasks against EFE and Q-MDP.
\keywords{Bethe free energy \and Expected free energy \and Constraints \and Active inference \and Message passing \and Planning \and Information theory}
\end{abstract}
\section{Introduction}

Active inference frames adaptive decision-making under uncertainty as the minimisation of an expected free energy functional (EFE) that, unlike the objectives of reinforcement learning, naturally balances exploration and exploitation~\cite{da2020active,parr2022active,ladosz2022exploration}.
Scaling active inference in structured generative models is typically done by casting inference as message passing on factor graphs~\cite{loeliger2007factor,de2017factor,champion2021realizing,friston2025pixels,nuijten2025message,kouw2025message}. The analogue of Friston's free energy functional on factor graphs is the Bethe free energy, which distributes energy terms to nodes, entropy terms to edges and avoids over-counting entropies around deterministic operation nodes \cite{pearl1988probabilistic,yedidia2005constructing,csenoz2021variational}. The stationary points of the Bethe free energy functional correspond to messages that are passed between nodes and multiplied at edges to form marginal distributions. Sum-product and variational message passing as well as expectation-propagation all arise as stationary points but under different constraints on the functional~\cite{zhang2021unifying,csenoz2021variational}. 
The Bethe free energy has the structure of a Kullback--Leibler (KL) divergence, which allows its terms to be cleanly distributed over nodes and edges. EFE, however, does not have the structure of a KL divergence due to its expectation over yet-unobserved future outcomes. As such, its stationary points cannot be immediately distributed over the model's corresponding factor graph~\cite{van2022active,van2024realizing,de2025expected,nuijten2025message}. 
%
Efforts to address this discrepancy have been diverse. Schwöbel et al. propagated beliefs over future trajectories under a fixed expected-free-energy policy score \cite{schwobel2018active}, essentially leaving EFE out of the message passing procedure. Van de Laar et al. introduced a point-mass constraint on predicted outcomes that ensures that the constrained Bethe free energy functional experiences an epistemic drive \cite{van2022active}. Recently, Kouw et al. accepted the mismatch and simply passed the EFE solution as a non-standard message on the graph \cite{kouw2025message}, while Nuijten et al. added epistemic priors to the generative model such that minimisation of the Bethe free energy coincides with EFE minimisation \cite{nuijten2025message}.

We follow Van de Laar et al.'s direction in that we formulate active inference as minimisation of a constrained Bethe free energy functional, but we impose an information constraint instead of a point-mass. Specifically, we require that the mutual information between observations, states and parameters given actions, averaged over the variational posterior over actions, must be at least as large as the entropy of the goal prior distribution. Minimising the constrained Bethe Lagrangian under normalisation, marginalisation, form, and information constraints yields a stationary policy (see Eq.~\ref{eq:policy-star}) that combines goal-seeking and information-seeking terms. The information-seeking term has a weight $\gamma$ (Section~\ref{sec:stationary_points}) that arises from the Karush--Kuhn--Tucker (KKT) multiplier on the information constraint and is solved rather than tuned \cite{boyd2004convex}. At $\gamma = 1$, EFE is recovered (see \autoref{prop:efe}). At $\gamma = 0$, the constraint is inactive and the agent experiences no epistemic drive. We test this constrained Bethe agent on three tasks, comparing it to standard EFE minimisation and to Q-MDP \cite{littman1995learning}.

\section{Problem statement}

Consider an agent that interacts with its environment in a discrete-time manner. It makes actions $u_t \in \mathcal{U}$ and observes outcomes $y_t \in \mathcal{Y}$, where $\mathcal{U}$ and $\mathcal{Y}$ are finite sets. It employs a probabilistic state-space model with discrete states $x_t \in \mathcal{X}$, transition parameters $B$, and emission parameters $A$.

At time step $t$, to plan for its next action, the agent unrolls the model forwards in time to a horizon of $K$ time steps, indexed by $k$ and collected as $\tau = (t+1, t+2, \dots, t+K)$. After attaching a goal prior $p_*(y_k)$ to every predicted observation, the generative model becomes
\begin{align}\label{eq:generative-planning-model}
    p(&\,y_{\tau}, u_\tau, x_{\tau}, x_t, B, A \given \mathcal{D}_t) \propto  \\
    &\underbrace{p(x_t \given \mathcal{D}_t)}_{f_t}  \underbrace{p(B \given \mathcal{D}_t)}_{f_B} \underbrace{p(A \given \mathcal{D}_t)}_{f_A} \prod_{k \in \tau} \underbrace{p(x_k \given x_{k-1}, u_k, B)}_{f_x}  \underbrace{p(y_k \given x_k, A)}_{f_y} \underbrace{p(u_k)}_{f_u} \underbrace{p_*(y_k)}_{f_*} , \nonumber
\end{align}
where $\mathcal{D}_t = \{y_{1:t}, u_{1:t}\}$. Note that $t$ indexes trial time and $k$ indexes planning time steps, and that the proportionality is due to the multiplication with the goal priors, leaving the product unnormalised. For brevity we collect the model parameters as $\Theta = \{A, B\}$ and the policy as $u_\tau = (u_{t+1}, \dots, u_{t+K})$ with prior $p(u_\tau) = \prod_{k\in\tau} p(u_k)$. The Forney-style factor graph of~\eqref{eq:generative-planning-model} is shown in \autoref{fig:ffg-planning}.

\begin{figure}[htb]
    \resizebox{\textwidth}{!}{\begin{tikzpicture}[every node/.style={font=\small}]


    \node [style=stochastic, minimum width=8mm, minimum height=8mm]   (pxk) at (-4., 0) {$f_t$};
    \node [draw, rectangle, minimum width=8mm, minimum height=8mm]     (f1)  at (-1.6, 0) {$f_x$};
    \node [style=deterministic]                                        (ex1) at ( 0.2, 0) {$=$};
    \node [draw, rectangle, minimum width=8mm, minimum height=8mm]     (f2)  at ( 1.8, 0) {$f_x$};
    \node [style=deterministic]                                        (ex2) at ( 3.6, 0) {$=$};
    \node [draw, rectangle, minimum width=8mm, minimum height=8mm]     (f3)  at ( 6.0, 0) {$f_x$};
    \coordinate                                                         (xT)  at ( 7.6, 0);

    \node [draw, rectangle, minimum width=8mm, minimum height=8mm] (g1) at (0.2, -2.0) {$f_y$};
    \node [draw, rectangle, minimum width=8mm, minimum height=8mm] (g2) at (3.6, -2.0) {$f_y$};
    \node [draw, rectangle, minimum width=8mm, minimum height=8mm] (g3) at (7.6, -2.0) {$f_y$};
    
    \node [style=stochastic, minimum width=7mm, minimum height=7mm] (y1) at (0.2, -3.5) {$f_*$};
    \node [style=stochastic, minimum width=7mm, minimum height=7mm] (y2) at (3.6, -3.5) {$f_*$};
    \node [style=stochastic, minimum width=7mm, minimum height=7mm] (y3) at (7.6, -3.5) {$f_*$};
    \node [style=edgelabel, right=1.mm] at ($(g1.south)!0.5!(y1.north)$) {$y_{t+1}$};
    \node [style=edgelabel, right=1.mm] at ($(g2.south)!0.5!(y2.north)$) {$y_{t+2}$};
    \node [style=edgelabel, left=1.mm] at ($(g3.south)!0.5!(y3.north)$) {$y_{t+K}$};

    \node [style=stochastic, minimum width=7mm, minimum height=7mm] (u1) at (-1.3, 1.2) {$f_u$};
    \node [style=stochastic, minimum width=7mm, minimum height=7mm] (u2) at ( 2.1, 1.2) {$f_u$};
    \node [style=stochastic, minimum width=7mm, minimum height=7mm] (u3) at ( 6.3, 1.2) {$f_u$};
    \node [right=0.5mm of u1, font=\scriptsize] {};
    \node [right=0.5mm of u2, font=\scriptsize] {};
    \node [right=0.5mm of u3, font=\scriptsize] {};

    \node [style=stochastic, minimum width=8mm, minimum height=8mm]    (pth) at (-4, 2) {$f_B$};
    \node [style=deterministic] (th1) at (-1.9, 2) {$=$};
    \node [style=deterministic] (th2) at ( 1.5, 2) {$=$};
    \node [] (th3) at ( 5.7, 2) {};

    \node [style=stochastic, minimum width=8mm, minimum height=8mm]    (pze) at (-4, -1.1) {$f_A$};
    \node [style=deterministic] (ze1) at (-0.1, -1.1) {$=$};
    \node [style=deterministic] (ze2) at ( 3.3, -1.1) {$=$};
    \node [] (ze3) at ( 7.3, -1.1) {};


    \draw[-] (pxk) -- (f1);
    \draw[-] (f1)  -- (ex1);
    \draw[-] (ex1) -- (f2);
    \draw[-] (f2)  -- (ex2);
    \draw[-] (ex2) -- (4.4, 0);
    \node at (4.8, 0) {$\cdots$};
    \draw[-] (5.2, 0) -- (f3);
    \draw[-] (f3)  -- (xT);

    \node [style=edgelabel, above] at ($(pxk)!0.55!(f1)$)   {$x_t$};
    \node [style=edgelabel, above] at ($(f1)!0.55!(ex1)$) {$x_{t+1}$};
    \node [style=edgelabel, above] at ($(f2)!0.55!(ex2)$) {$x_{t+2}$};
    \node [style=edgelabel, above] at ($(f3.east)!0.5!(xT)$) {$x_{t+K}$};

    \draw[-] (ex1) -- (g1);
    \draw[-] (ex2) -- (g2);
    \draw[-] (xT)  -- (g3);
    \draw[-] (g1) -- (y1);
    \draw[-] (g2) -- (y2);
    \draw[-] (g3) -- (y3);

    \draw[-] (u1) -- ([xshift=3mm]f1.north);
    \draw[-] (u2) -- ([xshift=3mm]f2.north);
    \draw[-] (u3) -- ([xshift=3mm]f3.north);
    \node [style=edgelabel, right = 1.mm] at ($(u1)!0.5!([xshift=3mm, yshift=-3mm]f1.north)$) {$u_{t+1}$};
    \node [style=edgelabel, right = 1.mm] at ($(u2)!0.5!([xshift=3mm, yshift=-3mm]f2.north)$) {$u_{t+2}$};
    \node [style=edgelabel, right = 1.mm] at ($(u3)!0.5!([xshift=3mm, yshift=-3mm]f3.north)$) {$u_{t+K}$};

    \draw[-] (pth) -- (th1) -- (th2) -- (th3.center);
    \draw[-] (th1) -- ([xshift=-3mm]f1.north);
    \draw[-] (th2) -- ([xshift=-3mm]f2.north);
    \draw[-] (th3.center) -- ([xshift=-3mm]f3.north);
    \node [style=edgelabel, above = 1.mm] at ($(pth)!0.5!(th1)$) {$B$};

    \draw[-] (pze) -- (ze1) -- (ze2) -- (ze3.center);
    \draw[-] (ze1) -- ([xshift=-3mm]g1.north);
    \draw[-] (ze2) -- ([xshift=-3mm]g2.north);
    \draw[-] (ze3.center) -- ([xshift=-3mm]g3.north);
    \node [style=edgelabel, below = 1.mm] at ($(pze)!0.5!(ze1)$) {$A$};

    \msgcircle{down}{right}{pxk.east}{f1.west}{0.55}{1}
    \msgcircle{down}{right}{ex1.east}{f2.west}{0.5}{2}
    \msgcircle{down}{left}{xT}{f3.east}{0.55}{3}
    \msgcircle{right}{up}{y3}{g3.south}{0.5}{4}

\end{tikzpicture}}
    \vspace{-5pt}
    \caption{Forney-style factor graph of the planning model in~\eqref{eq:generative-planning-model} over the horizon $\tau$. Each transition factor ($f_x$) maps the previous state $x_{k-1}$, the control $u_k$ and the transition parameters $B$ to $x_k$. Each emission factor ($f_y$) maps $x_k$ and the emission parameters $A$ to the predicted observation $y_k$, whose edge terminates in the goal-prior factor $f_* = p_*(y_k)$. Equality nodes ($=$) share $B$ and $A$ across time. Forward predictive messages (1, 2) and backward messages (3, 4) realise the forward--backward sweep of Section~\ref{sec:implementation}.}
    \label{fig:ffg-planning}
    \vspace{-10pt}
\end{figure}

\subsubsection{Expected free energy} Expected free energy scores candidate actions by the surprise an agent expects to incur about its future observations under its preferences~\cite{da2020active}. Writing $q(y_\tau, x_\tau, u_\tau, A, B) = q(y_\tau \given x_\tau, A)\, q(x_\tau, u_\tau, A, B)$ for the variational model, the expected free energy functional is
\begin{equation} \label{eq:efe-functional}
	\mathcal{G}[q] = \mathbb{E}_{q(x_\tau, u_\tau, A, B)} \Big[ \mathbb{E}_{q(y_\tau \given x_\tau, A)} \big[ \ln \frac{q(x_\tau, u_\tau, A, B)}{p(y_{\tau}, u_{\tau}, x_{\tau},  x_t, B, A \given \mathcal{D}_t)} \big] \Big] \, .
\end{equation}
For planning models with independent control priors $p(u_k)$, the optimal marginal variational factor for controls is $q_{\mathrm{EFE}}(u_k) \propto p(u_k) \exp( - G(u_k))$ \cite{van2022active}. The expected free energy function $G(u_k)$ decomposes into risk and ambiguity terms:
\begin{align}\label{eq:efe-def}
    G(u_k) &= \E_{q(y_k, x_k, B, A \given u_k)}\!\left[\ln \frac{q(x_k, B, A \given u_k)}{p(y_{k}, x_{k}, B, A \given u_k, \mathcal{D}_t)}\right] \\
    &= \underbrace{D_{\mathrm{KL}}\big[q(y_k \given u_k) \,\|\, p_*(y_k) \big]}_{\text{risk}} + \underbrace{\E_{q(x_k, B, A | u_k)}\Big[H\big[p(y_k \given x_k, B, A, u_k) \big] \Big]}_{\text{ambiguity}}  , \label{eq:efe-risk-ambiguity}
\end{align}
where $D_{\mathrm{KL}}[\cdot]$ denotes KL-divergence and $H[\cdot]$ denotes entropy \cite{cover1999elements}.
Risk penalises divergence between predicted outcomes and the goal prior, while ambiguity penalises outcomes that are uninformative about the latent states and parameters. Thus, exploitation (low risk) and exploration (high information gain) emerge from a single objective \cite{da2020active}. Re-arranging terms exposes the information gain:
\begin{equation}
I\big[x_k, \Theta;\, y_k \given u_k \big] = H\big[q(y_k \given u_k) \big] - \E_{q(x_k,\Theta \given u_k)} \Big[H \big[p(y_k \given x_k, \Theta) \big] \Big] \, .
\end{equation} 
This is the predictive mutual information between the latents and the outcome, which allows us to express the negative expected free energy as a goal cross-entropy plus an information gain
\begin{equation}\label{eq:efe-identity}
    -G(u_k) \;=\; \E_{q(y_k \given u_k)} \big[\ln p_*(y_k)\big] \;+\; I\big[x_k, \Theta \, ;\, y_k \given u_k\big]\, .
\end{equation}
However, the expectation over unobserved outcomes in~\eqref{eq:efe-functional}, i.e., $\mathbb{E}_{q(y_\tau | x_\tau, A)}[\cdot]$, means the functional is not a Kullback--Leibler divergence. As such, it does not decompose into one average energy per factor and one entropy per edge, and its stationary points therefore do not correspond to the variational message updates~\cite{van2024realizing,de2025expected,nuijten2025message}. Restoring that structure does not help; incorporating $q(y_\tau \given x_\tau, A)$ into the numerator of~\eqref{eq:efe-functional} yields a proper variational free energy, but the emission entropy it contributes cancels the emission energy exactly (see \autoref{lem:collapse}). So the ambiguity term vanishes with it and only risk survives. 

The next section demonstrates how one may obtain the EFE solution (Eq.~\ref{eq:efe-identity}) by minimising the Bethe free energy functional under the right set of constraints.

\section{Constrained Bethe free energy}

The Bethe free energy of the planning graph collects one average-energy term per factor of~\eqref{eq:generative-planning-model}, one entropy term for every variable and a term that prevents over-counting entropies based on the degree of each variable,
\begin{equation}\label{eq:bethe-def}
\begin{aligned}
    \mathcal{B}[q]
    =&\, \E_{q}\!\left[\ln \frac{q(x_t)}{p(x_t\given\mathcal{D}_t)}\right]  + \E_{q}\!\left[\ln \frac{q(B)}{p(B\given\mathcal{D}_t)}\right] + \E_{q}\!\left[\ln \frac{q(A)}{p(A\given\mathcal{D}_t)}\right]\\
    &+ \sum_{k\in\tau} \E_{q}\!\left[\ln \frac{q(x_k, x_{k-1}, u_k, B)}{p(x_k \given x_{k-1}, u_k, B)}\right]
      + \sum_{k\in\tau} \E_{q}\!\left[\ln \frac{q(y_k, x_k, A)}{p(y_k \given x_k, A)}\right] \\
    &+ \sum_{k\in\tau} \E_{q}\!\left[\ln \frac{q(u_k)}{p(u_k)}\right]
      + \sum_{k\in\tau} \E_{q}\!\left[\ln \frac{q(y_k)}{p_*(y_k)}\right]
      \,-\sum_{s\in\mathcal{S}} (d_s \tm 1)\,\E_{q}[\ln q(s)]\,,
\end{aligned}
\end{equation}
where $\mathcal{S} = \{A, B, x_t \} \cup \{u_k, x_k, y_k\}_{k\in\tau}$ collects the variables shared across factors.
The degrees are $d_B = d_A = K+1$ (one parameter prior plus $K$ transitions or emissions), $d_{x_t}=d_{x_{t+K}}=2$, $d_{x_k}=3$ for $k\in\tau\setminus\{t+K\}$, and $d_{u_k}=2$.
The goal-prior factors raise the degree of the predicted observations to $d_{y_k}=2$. 

\subsubsection{Constraints}

We impose four families of constraints on the variational factors. Firstly, normalisation, requiring all variational factors to be valid probability distributions, and secondly, consistency across marginalisation for every cluster--edge incidence $(a,i)$ in the planning graph: 
\begin{equation}
	\sum_{s_i} q_i(s_i) = 1 \, , \qquad \qquad \sum_{s_a\setminus s_i} q_a(s_a) = q_i(s_i) \, .
\end{equation}
Thirdly, we impose form constraints that pin the rolled out beliefs to the model's own forward predictions.

\begin{definition}[Form constraints]\label{def:form-constraint}
  The variational factors of the current state and parameter beliefs are constrained to match the generative model, 
    \begin{equation}
    q(x_t) = p(x_t \given \mathcal{D}_t), \qquad q(B) = p(B\given\mathcal{D}_t) \quad \text{and} \quad q(A) = p(A\given\mathcal{D}_t) \, .
    \end{equation}
    For $k \in \tau$, the variational factors of the state transition and likelihood match the generative model,
    \begin{equation}
    q(x_k \given x_{k-1}, u_k, B) = p(x_k \given x_{k-1}, u_k, B) \quad \text{and} \quad q(y_k \given x_k, A) = p(y_k \given x_k, A) \, .
    \end{equation}
\end{definition}
In short, we do not infer the state prior, the parameter beliefs, the structure of the state transition and the structure of the emission likelihood during action planning.
Lastly, we impose a constraint on the amount of information we expect to gain under our action posterior.
\begin{definition}[Information constraint]
For each $k\in\tau$, the expected joint mutual information between the latent state and parameters on the one hand, and the predicted observation on the other, is at least the goal-prior entropy:
    \begin{equation}\label{eq:info-constraint}
        \E_{q(u_\tau)}\Big[I\big[x_k,\Theta;\, y_k \mid u_\tau\big] \Big] \;\ge\; H\big[p_*(y_k) \big] \, .
    \end{equation}
\end{definition}
The information $I\big[x_k,\Theta;\,y_k\given u_\tau \big]$ is evaluated under the form constraints, so it is a fixed function of $u_\tau$.
It decomposes into state information gain $I\big[x_k;\,y_k\mid\Theta,u_\tau \big]$ (salience), and parameter information gain $I\big[\Theta;\,y_k\mid u_\tau \big]$ (novelty), thus addressing both epistemic drives.

The lower bound, i.e., the information floor $\beta_k \equiv H\big[p_*(y_k)\big]$, is based on the goal prior rather than set by hand or tuned as a hyperparameter. The agent must make the predicted outcome as informative about the latents as its preferences require, which is also what EFE incentivises (see \autoref{prop:efe}).

\subsubsection{Stationary points} \label{sec:stationary_points}
The form constraints ensure the only free belief left on the planning graph is the policy belief $q(u_\tau)$, drastically simplifying the Bethe free energy functional.
\begin{lemma}[Bethe collapse]\label{lem:collapse}
Under the form constraints of \autoref{def:form-constraint}, the Bethe free energy~\eqref{eq:bethe-def} of the goal-augmented planning graph reduces exactly to
\begin{equation}\label{eq:bethe-collapse}
    \mathcal{B}[q] \;=\; D_{\mathrm{KL}}\big[q(u_\tau)\,\big\|\,p(u_\tau)\big] \;-\; \sum_{k\in\tau} \E_{q(u_\tau)}\, \Big[ \E_{q(y_k\given u_\tau)}\big[\ln p_*(y_k)\big] \Big] \,.
\end{equation}
\end{lemma}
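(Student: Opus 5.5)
Under the form constraints every cluster belief factorises into the fixed model conditionals times the free policy belief, so the plan is to substitute these factorisations into each term of \eqref{eq:bethe-def} and check that everything except the policy divergence and the goal cross-entropy cancels. The factorisations are
\[
q(x_k,x_{k-1},u_k,B)=p(x_k\given x_{k-1},u_k,B)\,q(x_{k-1},u_k,B),
\qquad
q(y_k,x_k,A)=p(y_k\given x_k,A)\,q(x_k,A).
\]
The marginalisation constraints make all node and edge marginals the forward predictions of the model under $q(u_\tau)$. These are $q(x_k\given u_\tau)$, $q(y_k\given u_\tau)$, $q(B)$ and $q(A)$, averaged over $q(u_\tau)$. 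I would first dispose of the prior terms. The terms for $f_t$, $f_B$ and $f_A$ vanish, since $q(x_t)=p(x_t\given\mathcal{D}_t)$, $q(B)=p(B\given\mathcal{D}_t)$ and $q(A)=p(A\given\mathcal{D}_t)$.

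Next I would expand each transition and emission term with the chain rule. The log-ratio inside the transition term becomes $\ln q(x_{k-1},u_k,B)$, because the conditional cancels against $p(x_k\given x_{k-1},u_k,B)$. Likewise the emission term becomes $\E[\ln q(x_k,A)]$, because the emission entropy cancels the emission energy exactly, which is the collapse referred to in the text. What remains after these cancellations is a sum of joint negative entropies of the parents of each factor. In addition there are the control terms, the goal terms $\E[\ln q(y_k)]-\E[\ln p_*(y_k)]$, and the edge counting correction $-\sum_s(d_s-1)\E[\ln q(s)]$.

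The core step is to show that these leftover entropy terms telescope down to $\E[\ln q(u_\tau)]$. Two cancellations carry this. First, the $+\E\ln q(y_k)$ from each goal term cancels the correction for $y_k$, since $d_{y_k}-1=1$. Second, the chain structure accounts for the state and parameter terms. I would write the whole tree-structured forward rollout as a single joint $q(x_t,x_\tau,y_\tau,u_\tau,A,B)$ that is a product of model conditionals and $q(u_\tau)$. On a tree the Bethe entropy equals this joint entropy, and the conditional entropies of the model factors appear in it with the same signs as in the cluster terms. Cancelling those conditional entropies leaves the prior entropies of $x_t,A,B$, which are already absorbed into the vanished prior terms, together with $-H[q(u_\tau)]$.

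Combining $-H[q(u_\tau)]$ with the control energy $-\E\ln p(u_\tau)$ gives $D_{\mathrm{KL}}[q(u_\tau)\|p(u_\tau)]$. The surviving goal energy is $-\sum_k\E_{q(u_\tau)}\E_{q(y_k\given u_\tau)}[\ln p_*(y_k)]$. Together these give \eqref{eq:bethe-collapse}.

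The main obstacle is the telescoping step. The graph is not actually a tree: $A$ and $B$ are shared across time steps, and so loops arise, which means the Bethe entropy is not exactly the joint entropy. My first approach would be to argue directly at the level of cluster marginals. Under the form constraints each cluster joint is the exact marginal of the rollout joint, and the counting numbers $d_s-1$ were chosen precisely so that the parent entropies cancel variable by variable. If this exact cancellation fails because of loop-induced correlations between $A$ or $B$ and the states, the fallback is to note that those residual terms do not depend on $q(u_\tau)$ once it is fixed. I would then treat them as constants absorbed under the form constraints, so that the reduction holds in the sense the statement requires for optimisation over $q(u_\tau)$.
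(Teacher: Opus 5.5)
The gap is the telescoping step you flag yourself, and neither of your two remedies closes it. Your substitutions up to that point are right. The prior terms vanish, and the transition term becomes $-H[x_{k-1},u_k,B]$. The emission term becomes $-H[x_k]-H[A]$, because $A$ is independent of the trajectory under \eqref{eq:constrained-family}. The $y_k$ entropies cancel. But finishing the bookkeeping with the counting numbers of \eqref{eq:bethe-def}, with the controls in a policy cluster, gives
\[
\mathcal{B}[q] = D_{\mathrm{KL}}\big[q(u_\tau)\,\|\,p(u_\tau)\big] - \sum_{k\in\tau}\E_q\big[\ln p_*(y_k)\big] + \sum_{k\in\tau}\Big(H[x_{k-1}] + H[u_k] + H[B] - H[x_{k-1},u_k,B]\Big).
\]
With the separate control factors printed in \eqref{eq:bethe-def} there is an extra term $-\big(\sum_{k}H[u_k]-H[u_\tau]\big)$. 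The counting numbers only subtract singleton entropies, while each transition cluster contributes the joint entropy of its parents. That joint entropy splits into $H[x_{k-1}]+H[u_k]+H[B]$ only if the three variables are independent. They are independent at $k=t+1$, but not for $k\ge t+2$, since $x_{k-1}$ was generated through $B$ and through earlier controls that $q(u_\tau)$ may correlate with $u_k$.

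Your fallback fails as well. These marginals are $q(u_\tau)$-mixtures of rollouts, so the residual depends on $q(u_\tau)$. For a point-mass policy the residual equals $\sum_{k}I[x_{k-1};B]$ along that rollout. This is strictly positive for a rollout that passes through an uncertain transition column (the gate of the gate-cue task) before the last step, and zero for one that uses only point-mass transitions. The residual would therefore add a policy-dependent, novelty-like penalty to the stationarity condition of \autoref{thm:policy}. An additive constant would not give the claimed exact identity in any case.

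The paper's proof asserts that, conditional on $u_\tau$ and with $q(A)$, $q(B)$ fixed, the family factorises along a tree. It concludes that the Bethe free energy equals the variational free energy of the joint, after which everything cancels. That is the same tree-exactness step you invoke, and your suspicion of it is justified. The cluster beliefs entering \eqref{eq:bethe-def} are the $u_\tau$-averaged marginals of \eqref{eq:constrained-family}, not conditionals given $u_\tau$. Holding the marginal $q(B)$ fixed does not decouple $B$ from the states it generates. So the $B$-chain and the cross-step correlations of the policy still break exactness; the $A$-chain, by contrast, is harmless.

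What is missing is therefore not a cleverer cancellation but an additional hypothesis that removes the residual. Examples are $K=1$, or a point-mass $B$ together with either a factorised policy belief or a Bethe functional defined per policy, $D_{\mathrm{KL}}\big[q(u_\tau)\,\|\,p(u_\tau)\big]+\E_{q(u_\tau)}\big[\mathcal{B}[q(\cdot\given u_\tau)]\big]$. Under any of these, your term-by-term computation closes immediately.
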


\begin{proof}[sketch]
Given $u_\tau$, the constrained family equals the model's own factorisation, so every transition and likelihood energy cancels against the corresponding entropy term. Only the policy prior and the goal factors survive.  \qed
\end{proof}
The full accounting is given in Appendix \ref{app:proof_lem1}.

Introducing a multiplier $\lambda$ for the normalisation of $q(u_\tau)$ and $\gamma_k \ge 0$ for the information constraint at each $k\in\tau$, the constrained Lagrangian reads
\begin{equation}\label{eq:lagrangian}
    \mathcal{L}[q] = \mathcal{B}[q]
    + \lambda\Big(\sum_{u_\tau} q(u_\tau) - 1\Big)
    + \sum_{k\in\tau} \gamma_k \Big(\beta_k - \E_{q(u_\tau)}\Big[ I \big[x_k,\Theta;\, y_k \given u_\tau \big] \Big] \Big)\,,
\end{equation}
where the information term is written in Karush--Kuhn--Tucker form  \cite{boyd2004convex}. The multiplier $\gamma_k$ must be non-negative. Wherever the bound is slack, $\gamma_k = 0$. So, the multiplier switches on only when the agent would otherwise be insufficiently informative.

\begin{theorem}[Stationary policy]\label{thm:policy}
Every stationary point of the Lagrangian~\eqref{eq:lagrangian} subject to the form constraints (\autoref{def:form-constraint}) has policy belief
\begin{equation}\label{eq:policy-star}
    q^\star(u_\tau) \;\propto\; p(u_\tau)\, \exp\Big( \sum_{k\in\tau} \Big( \E_{q(y_k \given u_\tau)}\big[\ln p_*(y_k)\big] + \gamma_k\, I\big[x_k,\Theta\, ;\, y_k \mid u_\tau\big] \Big) \Big)\,,
\end{equation}
with $\gamma_k \ge 0$, and control marginals $q^\star(u_k) = \sum_{u_\tau \setminus u_k} q^\star(u_\tau)$.
\end{theorem}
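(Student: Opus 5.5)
By \autoref{lem:collapse}, the form constraints leave $q(u_\tau)$ as the only free belief. I will therefore treat the Lagrangian~\eqref{eq:lagrangian} as a function of the finite vector $\{q(u_\tau)\}_{u_\tau}$ on the simplex and solve the stationarity condition coordinate-wise. Two facts make this possible. The collapsed Bethe free energy~\eqref{eq:bethe-collapse} is a KL term plus a term linear in $q(u_\tau)$. The constraint term is also linear in $q(u_\tau)$, because $I[x_k,\Theta;y_k\mid u_\tau]$ is evaluated under the form constraints and is therefore a fixed function of $u_\tau$ that does not depend on $q$.

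\textbf{Step 1 (rewrite as linear functionals).} Define
\[
c_k(u_\tau)=\E_{q(y_k\given u_\tau)}[\ln p_*(y_k)], \qquad I_k(u_\tau)=I[x_k,\Theta;y_k\mid u_\tau].
\]
Both are determined by the model and the form constraints alone. The Lagrangian then reads
\[
\mathcal{L}=\sum_{u_\tau}q(u_\tau)\ln\frac{q(u_\tau)}{p(u_\tau)}-\sum_{u_\tau}q(u_\tau)\sum_k\big(c_k(u_\tau)+\gamma_k I_k(u_\tau)\big)+\lambda\Big(\sum_{u_\tau}q(u_\tau)-1\Big)+\sum_k\gamma_k\beta_k .
\]
Here $\beta_k=H[p_*(y_k)]$ is also a constant.

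\textbf{Step 2 (stationarity).} I take the partial derivative with respect to $q(u_\tau)$ and set it to zero:
\[
\ln q(u_\tau)-\ln p(u_\tau)+1-\sum_k\big(c_k+\gamma_k I_k\big)+\lambda=0 .
\]
Exponentiating gives $q(u_\tau)=p(u_\tau)\exp(\sum_k(c_k+\gamma_k I_k))\,e^{-1-\lambda}$. The normalisation multiplier $\lambda$ is fixed by $\sum q=1$, which yields the proportionality in~\eqref{eq:policy-star}. Non-negativity $\gamma_k\ge0$ is part of the KKT conditions for the inequality~\eqref{eq:info-constraint} written as $\beta_k-\E_q[I_k]\le0$. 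I will also note complementary slackness ($\gamma_k=0$ when the constraint is slack) but do not need it for the displayed form. The control marginals $q^\star(u_k)$ then follow directly from the marginalisation constraints of the planning graph applied to the cluster containing $u_\tau$.

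\textbf{Main obstacle.} The delicate point is justifying that $c_k$ and $I_k$ truly do not depend on $q$, so that their functional derivatives contribute no extra terms. This requires showing that $q(y_k\given u_\tau)$ and $q(x_k,\Theta\given u_\tau)$ are obtained by forward rolling $p(x_t\given\mathcal{D}_t)$, $p(B\given\mathcal{D}_t)$, $p(A\given\mathcal{D}_t)$ through the fixed conditionals of \autoref{def:form-constraint}. That rollout involves no free variational factor. I would argue this by induction over $k$ using the marginalisation constraints.

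A second subtlety is the boundary of the simplex. Since $p(u_\tau)>0$ on its support and the exponent is finite, the stationary point has full support wherever $p$ does. Hence no inequality multipliers for $q\ge0$ are needed, and every stationary point must take the stated form.
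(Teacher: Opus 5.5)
Your proposal is correct and follows the paper's proof essentially step for step. Like the paper, you invoke the Bethe-collapse lemma, note that the goal cross-entropy and information terms are fixed functions of $u_\tau$ so the Lagrangian is a KL term plus terms linear in $q(u_\tau)$, set the derivative to zero, fix $\lambda$ by normalisation, and take $\gamma_k \ge 0$ from KKT dual feasibility; your extra remarks on why $c_k$ and $I_k$ are independent of $q$ (the paper reads this directly off the constrained family) and on the stationary point lying in the interior of the simplex (which the paper leaves implicit) are sound refinements rather than a different route.
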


\begin{proof}[sketch]
By \autoref{lem:collapse} and the linearity of the constraint terms in $q(u_\tau)$, the variation of~\eqref{eq:lagrangian} with respect to $q(u_\tau)$ is 
\begin{equation}
\begin{split}
\frac{\delta\mathcal{L}}{\delta q(u_\tau)} =&  \ln q(u_\tau) + 1 - \ln p(u_\tau) + \lambda \\
&-  \sum_{k\in\tau} \Big(\E_{q(y_k | u_\tau)}\big[ \ln p_*(y_k)\big]  + \gamma_k I\big[x_k,\Theta;\,y_k | u_\tau\big]\Big) .
\end{split}
\end{equation}
Setting it to zero and normalising yields~\eqref{eq:policy-star}. $\gamma_k \ge 0$ holds by the Karush--Kuhn--Tucker conditions for~\eqref{eq:info-constraint}. \qed
\end{proof}
The full proof is in Appendix \ref{app:proof_thm1}. The multipliers $\{\gamma_k\}$ are determined implicitly by the information constraints~\eqref{eq:info-constraint} and complementary slackness (\autoref{lem:gamma} in \refappx{appx:derivation}), while $\lambda$ is fixed by normalisation.
The stationary policy~\eqref{eq:policy-star} modulates the action prior by an exponential of the goal cross-entropy and the joint state-parameter--observation mutual information, whose weight $\gamma_k$ is solved rather than tuned. The relation to expected free energy is exact:

\begin{proposition}[Equivalence to expected free energy]\label{prop:efe}
If $\gamma_k = 1$ for all $k\in\tau$, then the stationary policy~\eqref{eq:policy-star} coincides exactly with the expected-free-energy policy,
\begin{equation}\label{eq:efe-equivalence}
    q^\star(u_\tau) \, = \, q_{\mathrm{EFE}}(u_\tau)  \propto \, p(u_\tau) \exp\big(-G(u_\tau)\big)\, ,
\end{equation}
where $G(u_\tau) = \sum_{k\in\tau} G(u_k \given u_\tau)$, writing $G(u_k \given u_\tau)$ for the step-$k$ expected free energy of~\eqref{eq:efe-def} evaluated under the rollout for the whole policy $u_\tau$.
\end{proposition}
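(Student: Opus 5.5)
The plan is to substitute $\gamma_k = 1$ into the stationary policy~\eqref{eq:policy-star} of \autoref{thm:policy} and identify the exponent, term by term, with $-G(u_\tau)$. With $\gamma_k=1$ the exponent is
\begin{equation*}
\sum_{k\in\tau} \Big( \E_{q(y_k \given u_\tau)}\big[\ln p_*(y_k)\big] + I\big[x_k,\Theta\,;\,y_k \mid u_\tau\big] \Big).
\end{equation*}
It therefore suffices to show, for each $k$, that this summand equals $-G(u_k \given u_\tau)$. Summing over $k$ then gives $-G(u_\tau)$. Since both policies carry the same prior factor $p(u_\tau)$ and are normalised over the same finite set, they coincide.

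For the per-step identity, I would start from the risk--ambiguity form~\eqref{eq:efe-risk-ambiguity} with every predictive quantity conditioned on the whole policy $u_\tau$ rather than on $u_k$ alone. The rollout $q(y_k, x_k, \Theta \given u_\tau)$ is generated by the form-constrained factors of \autoref{def:form-constraint}, so it is a fixed function of $u_\tau$. I would then expand the two terms:
\begin{equation*}
D_{\mathrm{KL}}\big[q(y_k\given u_\tau)\,\|\,p_*(y_k)\big] = -H\big[q(y_k\given u_\tau)\big] - \E_{q(y_k\given u_\tau)}\big[\ln p_*(y_k)\big],
\end{equation*}
and write the ambiguity as $\E_{q(x_k,\Theta\given u_\tau)}\big[H[p(y_k\given x_k,\Theta)]\big]$. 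Adding these and using the mutual-information decomposition stated before~\eqref{eq:efe-identity}, now conditioned on $u_\tau$, the entropy terms combine into $-I[x_k,\Theta;y_k\mid u_\tau]$. This reproduces~\eqref{eq:efe-identity} with $u_k$ replaced by $u_\tau$, and hence the required per-step identity.

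The main obstacle is a consistency check rather than a computation. I need to confirm that the quantities in \autoref{thm:policy} are the same objects as those in $G(u_k\given u_\tau)$. On the goal side, $q(y_k\given u_\tau)$ must be the same predictive marginal in both places. On the information side, the mutual information must use the model emission $p(y_k\given x_k,\Theta)$, which the form constraint $q(y_k\given x_k,A)=p(y_k\given x_k,A)$ guarantees. Two further points need care. The ambiguity in~\eqref{eq:efe-risk-ambiguity} conditions on $B$ as well, but $y_k$ depends on $\Theta$ only through $A$ given $x_k$, so $I[x_k,\Theta;y_k\mid u_\tau]=I[x_k,A;y_k\mid u_\tau]$ and this causes no discrepancy. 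Finally, the joint $q_{\mathrm{EFE}}(u_\tau)$ must be understood as the policy-level analogue of the marginal rule $q_{\mathrm{EFE}}(u_k)\propto p(u_k)\exp(-G(u_k))$, with $G$ summed over the horizon, which is exactly how $G(u_\tau)$ is defined in the statement.
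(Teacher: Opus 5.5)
Your proposal is correct and follows the paper's proof: you substitute $\gamma_k = 1$ into \eqref{eq:policy-star}, use the per-step identity \eqref{eq:efe-identity} evaluated under the rollout to recognise each summand as $-G(u_k \given u_\tau)$, and sum over $k$. The only difference is that you re-derive that identity from the risk--ambiguity form conditioned on $u_\tau$ (and check that $B$ drops out of the emission entropy), whereas the paper simply invokes \eqref{eq:efe-identity} directly.
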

The proof is in Appendix \ref{app:proof_prop1}.


\subsubsection{Implementation}\label{sec:implementation}
The policy space is enumerated ($|\mathcal{U}|^K$ rollouts, vectorised over policies), and a single forward sweep through the transition and emission factors computes, for every $u_\tau$ and $k\in\tau$, the predictive beliefs, the goal cross-entropies $\E_{q(y_k|u_\tau)}\big[\ln p_*(y_k)\big]$, and the mutual informations $I\big[x_k,\Theta;\,y_k | u_\tau\big]$. Because of the form constraints (\autoref{def:form-constraint}), inference on the planning graph is exact and we need not iterate message passing. The only iterative computation is the scalar dual.

However, solving for $K$ KKT-multipliers is computationally demanding. Instead, we aggregate the per-step constraints~\eqref{eq:info-constraint} into one horizon constraint, 
\begin{equation}
	\E_{q(u_\tau)}\Big[\sum_{k} I\big[x_k,\Theta;\,y_k | u_\tau\big] \Big] \ge \beta \qquad \text{for}\ \beta \equiv \sum_{k}\beta_k \, ,
\end{equation} 
with a single multiplier $\gamma$ shared across the horizon. This is a relaxation rather than a reformulation: meeting every per-step floor implies the summed one, but not conversely. So the agent is free to over-inform at one step and leave another slack. 
The shared multiplier $\gamma$ traces a one-parameter family of policies containing the expected-free-energy policy at $\gamma = 1$, and it replaces a $K$-dimensional root-find by a $1$-dimensional one. The two constraints (per-step and aggregated) agree whenever the per-step multipliers coincide at the optimum, which they do when the information floors and the attainable information are exchangeable across the horizon. 

When solving the KKT-multiplier, we set an upper bound, $\gamma_{\max}$ ($10^{3}$ in all experiments). The expected information gain is monotonically non-decreasing in $\gamma$ within the exponential family generated by~\eqref{eq:policy-star} (\autoref{lem:gamma} in \refappx{appx:derivation}). So complementary slackness leaves two possibilities: if the constraint is already met at $\gamma = 0$ then it is inactive and the agent experiences no epistemic drive. Otherwise, it binds at some $\gamma \in (0,\gamma_{\max})$, where the expected information gain matches the sum of goal prior entropies. That interior root of~\eqref{eq:gamma-condition} has no closed form, the expected information gain being a ratio of sums of exponentials in $\gamma$. We therefore bracket it by the sign change of the binding condition over $[0,\gamma_{\max}]$ and locate it with a Newton iteration from the bracket midpoint, falling back to bisection on the same bracket whenever Newton leaves it. A third case arises when the floor exceeds the information any policy can supply: the constraint cannot be met and the multiplier grows without bound. Thus, we cap it at $\gamma_{\max}$, where the epistemic drive is maximal.

Acting requires a single decision rather than the full policy belief. The agent commits to the maximum a posteriori action of the stationary policy,
\begin{equation}\label{eq:map-action}
    \hat{u}_{t+1} \;=\; \arg\max_{u_{t+1}\in\mathcal{U}} q^\star(u_{t+1})\,,
\end{equation}
with $q^\star(u_{t+1})$ the first-step marginal of~\eqref{eq:policy-star}. It replans at every step.

\section{Experiments} \label{sec:experiments}

We run three experiments\footnote{Code available at \href{https://github.com/biaslab/IWAI2026-EFEasBetheConstraint}{https://github.com/biaslab/IWAI2026-EFEasBetheConstraint}.}: a canonical T-maze, an information-gathering cue grid, and a gate-cue task with an unknown transition.

All experiments run $200$ trials from fixed seeds with $\gamma_{\max} = 10^{3}$, except the floor sweep of \autoref{fig:floor-sweep}b, which runs $100$ trials per floor value. Throughout, $T$ denotes the length of a trial, $K$ the length of the planning horizon, and CBFE the constrained Bethe free energy agent of \autoref{sec:stationary_points}. Goal priors are uniform on every observation modality except reward, where $p_*$ places $(10^{-3},\, 0.998,\, 10^{-3})$ on (null, reward, loss). In \autoref{sec:gatecue} the Dirichlet prior over the unknown gate column has total concentration $4$ (uniform mean) against $50$ on the known transitions.

As baselines, we use a standard EFE minimisation planner and Q-MDP~\cite{littman1995learning}. Q-MDP is equipped with the true transition and emission, the regime in which the parameter posterior $q(\Theta)$ collapses to a point mass and the novelty term in~\eqref{eq:policy-star} vanishes. It acts by maximising the optimal Q-function of the underlying fully observable Markov decision process, averaged over its filtering belief. Because that Q-function assumes uncertainty resolves at no cost, Q-MDP does not gather information. We grant it a longer lookahead than the CBFE agent ($10$ against $6$ steps on the cue grid, $4$ against $3$ on the gate-cue task), so that a shorter horizon cannot account for the gap. The comparison thus isolates the information constraint~\eqref{eq:info-constraint} under a shared generative model.

\subsection{Canonical T-maze}\label{sec:tmaze}
The T-maze is a discrete benchmark in which information seeking is decisive \cite{da2020active}. The states are $\{\text{center}, \text{cue}, \text{right arm}, \text{left arm}\}$ and the reward is $c \in \{\text{right}, \text{left}\}$, drawn uniformly at random across trials. The action $u_t \in \mathcal{U} = \{0, 1, 2, 3\}$ encodes the target location. The observation space $\mathcal{Y}$ is the triple (location, reward, cue). Entering the arm matching $c$ emits a reward outcome with probability $\rho = 0.98$ and a loss outcome otherwise, while the worse arm flips these probabilities. Visiting the cue location deterministically reveals $c$. Each trial runs for $T = 3$ steps, and the planning horizon is set to the same length, $K = 3$.

\subsubsection{Results.}
\autoref{tab:tmaze} reports the full comparison. The information constraint~\eqref{eq:info-constraint} makes cue-seeking systematic: the constrained agent visits the cue on every trial and always reaches the rewarding arm in the minimum two steps. Q-MDP reaches the cue almost as often, but assigns no value to the information it supplies, so it does not visit it first and still gambles on the wrong arm in $18\%$ of trials. Beyond trajectory agreement, we verify \autoref{prop:efe} directly: running the constrained Bethe agent with $\gamma=1$ and a goal prior shared with the EFE agent, the two policy posteriors agree on every planning call of all $200$ trials to machine precision, $\max_{u_\tau} |q^\star(u_\tau) - q_{\mathrm{EFE}}(u_\tau)| \le 3 \cdot 10^{-16}$.

\begin{table}[htb]
    \centering
    \caption{Canonical T-maze over $200$ trials ($T = K = 3$, $\rho = 0.98$; $\pm$ one standard error over trials). The constrained Bethe and standard-EFE agents behave identically. Q-MDP, blind to future information gain, reaches the cue without exploiting it and gambles on an arm in the remaining trials.
    }
    \label{tab:tmaze}
    \setlength{\arrayrulewidth}{0.25mm}
    \setlength{\tabcolsep}{9pt}
    \renewcommand{\arraystretch}{1.2}
    \begin{tabular}{| l | c | c | c | c |}
        \hline
		& Reach rate & Cue-visit rate & $\langle$time-to-target$\rangle$ & $\langle$return$\rangle$ \\
        \hline
        Q-MDP             & $0.82 \pm .03$ & $0.79 \pm .03$ & $2.68 \pm .05$ & $+1.22 \pm .06$ \\
        EFE & $1.00$ & $1.00$ & $2.00$ & $+1.92 \pm .03$ \\
        CBFE      & $1.00$ & $1.00$ & $2.00$ & $+1.92 \pm .03$ \\
        \hline
    \end{tabular}
            \vspace{-10pt}
\end{table}

\subsection{Information-gathering grid experiment}\label{sec:cuegrid}
The hidden state factorises into a grid cell and a reward condition $c \in \{1, 2\}$, fixed within a trial and uniform across trials, which selects the rewarding one of two absorbing goal cells.
The action $u_t \in \mathcal{U} = \{\textsc{ccw}, \textsc{cw}, \textsc{forward}\}$ -- turn counter-clockwise, turn clockwise, or step forward -- drives deterministic egocentric navigation.
The observation $y_t \in \mathcal{Y}$ is the quadruple (cell, heading, cue, reward): cell and heading are reported deterministically, the rewarding goal emits a reward outcome and the other a loss. A cue cell at the end of a dead-end corridor reveals $c$ when occupied.
Unlike the T-maze, the cue is a multi-step detour away from both goals, so an agent that does not value information gain has no incentive to visit it.
Each trial runs for at most $T_{\max} = 25$ steps with planning horizon $K = 6$, and the goal prior is peaked on the reward outcome as in the T-maze.

\subsubsection{Results.}
The constrained Bethe agent visits the cue on every trial, detouring to resolve $c$ before committing, and reaches the rewarding goal on $84.0 \pm 2.6\%$ of trials (\autoref{fig:cuegrid}, left). The residual failures are horizon-induced dithering in the cue's dead-end corridor, from which no $6$-step rollout reaches a goal.
EFE never visits the cue because the information gain of the detour is outweighed by the goal cross-entropy cost of postponing the goal. So it rushes to the nearest goal and gambles, reaching the rewarding one on $45.0 \pm 3.5\%$ of trials with negative mean return.
Q-MDP likewise never values the off-path cue (a $1\%$ incidental visit rate) and reaches the rewarding goal on only $36.5 \pm 3.4\%$ (\autoref{fig:cuegrid}, right).
Re-encoding the preferences in the canonical loss-averse log-preference form (EFE-LA in \autoref{tab:cuegrid}) rescues the EFE planner: a blind gamble then has negative value relative to waiting, so it visits the cue on every trial and reaches the rewarding goal on all but one of the $200$, exceeding even the saturated constrained agent, which loses trials to post-cue dithering.
The comparison points to a design trade-off: standard EFE obtains exploration through preference design, with the loss-aversion magnitudes implicitly setting the explore--exploit balance, whereas the constrained agent demands it explicitly.

Planning cost is dominated by the rollout shared with EFE ($\approx 0.35$\,s per call). Solving the scalar dual adds under $3$\,ms.

\begin{figure}[tb]
    \centering
    \includegraphics[width=\textwidth]{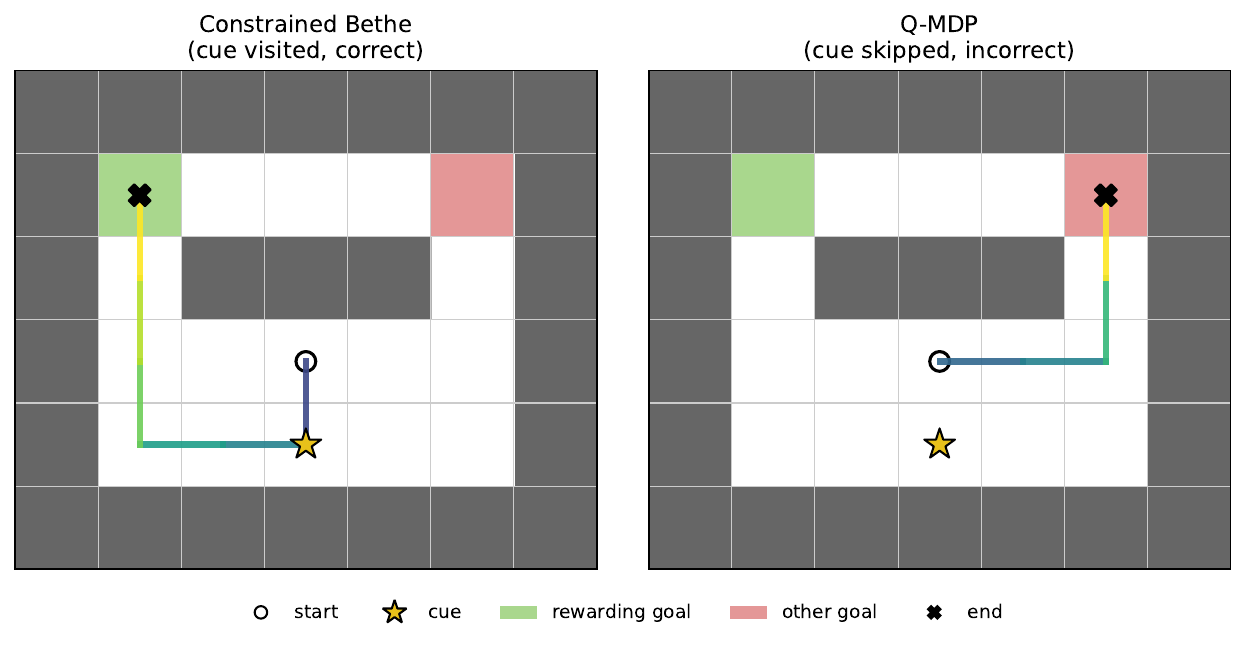}
            \vspace{-12pt}
    \caption{A representative cue-grid trial. Left: the constrained Bethe agent detours to the off-path cue (gold star) to resolve the reward condition, then proceeds to the rewarding goal (green). Right: Q-MDP ignores the cue and commits to a goal directly, here the non-rewarding one (red). Walls are grey, the open circle is the start, and paths shade from light (early) to dark (late).}
    \label{fig:cuegrid}
\end{figure}

\begin{table}[bt]
    \centering
    \caption{Information-gathering grid over $200$ trials ($K = 6$, $T_{\max} = 25$; $\pm$ one standard error). Reach rate is the fraction of trials reaching the rewarding goal. time-to-target is over the trials that reach a goal, which favours agents with lower reach rates. The unconditional return column is unaffected.
    }
    \label{tab:cuegrid}
    \setlength{\arrayrulewidth}{0.25mm}
    \setlength{\tabcolsep}{9pt}
    \renewcommand{\arraystretch}{1.2}
    \begin{tabular}{| l | c | c | c | c |}
        \hline
			 & Reach rate & Cue-visit rate & $\langle$time-to-target$\rangle$ & $\langle$return$\rangle$ \\
        \hline
        CBFE & $0.84 \pm .03$ & $1.00$ & $12.49 \pm 0.27$ & $+0.83 \pm .03$ \\
        EFE & $0.45 \pm .04$ & $0.00$ & $6.00 \pm 0.00$ & $-0.10 \pm .07$ \\
        EFE-LA & $1.00 \pm .01$ & $1.00$ & $12.32 \pm 0.18$ & $+1.00 \pm .01$ \\
        Q-MDP             & $0.37 \pm .03$ & $0.01 \pm .01$ & $13.08 \pm 0.34$ & $+0.02 \pm .06$ \\
        \hline
    \end{tabular}
        \vspace{-15pt}
\end{table}

\subsubsection{The floor selects the regime.}
In the benchmark above, $\beta$ is fixed to the full-joint goal-prior entropy, which no rollout attains, so the dual saturates at $\gamma_{\max}$. Basing $\beta$ on the reward modality alone and varying $H\big[p_*\big]$ instead exposes all three Karush--Kuhn--Tucker regimes of \autoref{lem:gamma} (\autoref{fig:floor-sweep}): the solved multiplier is zero while the floor is already met, rises through a unique interior root crossing the expected-free-energy value $\gamma = 1$, and saturates only once the floor exceeds the most informative rollout. Behaviourally the agent passes from a blind gamble to visiting the cue on every trial, so the multiplier is genuinely solved from the information demand.

\begin{figure}[htb]
    \centering
    \includegraphics[width=\textwidth]{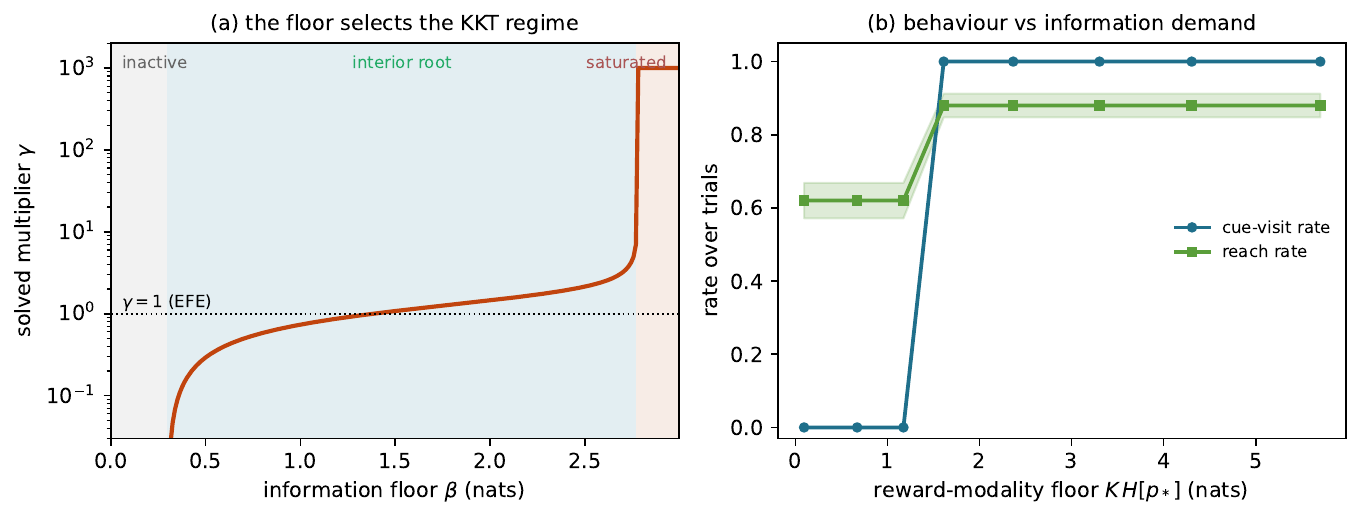}
	\vspace{-15pt}
    \caption{The information floor selects the dual's regime (cue grid). (a) Solved multiplier $\gamma$ against the per-horizon floor $\beta$ for a representative planning state: inactive at zero, a unique interior root crossing $\gamma = 1$, and saturation at $\gamma_{\max}$. (b) Cue-visit and rewarding-goal rates as the reward-modality floor $K\,H[p_*]$ is varied, rising from a blind gamble to maximal information seeking at the inactive--binding boundary of (a).}
    \label{fig:floor-sweep}
        \vspace{-10pt}
\end{figure}

\subsection{Salience--novelty allocation under learning}\label{sec:gatecue}
In the previous tasks, the dynamics were known and the only uncertainty was the hidden state. We now place the agent at a junction with two competing epistemic sources and let it learn. A hidden reward condition $c \in \{1, 2\}$, uniform per trial, is revealed at a \textsc{cue} (salience). In a separate direction lies a \textsc{gate} with an unknown transition. The agent holds a diffuse Dirichlet prior on the single column $B = p(x_t \mid x_{t-1} = \textsc{gate}, u_t = \textsc{use})$, whose true (fixed) destination can only be learned by probing. Resolving $c$ contributes salience $I\big[x_k;\,y_k \mid \Theta, u_\tau \big]$, and probing the gate contributes novelty $I\big[B;\,y_k \mid u_\tau \big]$, the parameter term reducing to $B$ since $A$ is known here. The goals sit on two one-way three-step branches sharing only the start, so with horizon $K = 3$ the agent commits to one source per trial. Crucially, the Dirichlet posterior over $B$ persists across trials, so what the agent learns about the gate carries over. We run $200$ trials.

\subsubsection{Results.}
Early on, while $B$ is uncertain, novelty exceeds the cue salience and the single multiplier $\gamma$ drives the agent down the gate branch. For the first $29$ trials it probes the gate every time and, with $c$ unresolved, gambles on a goal ($34\%$ correct). As probing concentrates the Dirichlet posterior (gate-belief error falling from $0.83$ to $0.18$) novelty decays, and under the same constraint the agent's effort shifts to the cue. It visits the cue on every subsequent trial, resolves $c$, and reaches the rewarding goal on $100\%$ of them (\autoref{fig:gatecue}a,b).

Two controls isolate the mechanism: a salience-only agent given the true gate (point-mass $B$, novelty zero) never probes and heads straight for the cue from the first trial. An agent whose Dirichlet posterior is reset each trial keeps probing indefinitely. Q-MDP, valuing neither source, performs at chance ($49\%$).

A novelty-equipped standard EFE planner shows the same shift (\autoref{fig:gatecue}b) but switches earlier (last probe at trial $15$; $95.5\%$ correct overall vs.\ $90.5\%$ for the saturated agent), the unit multiplier weighting information more weakly than the saturated dual. In both cases the mechanism is carried by the joint state--parameter information gain: a single multiplier reallocates effort from novelty to salience as the agent learns its own dynamics.

\begin{figure}[thb]
    \centering
    \includegraphics[width=\textwidth]{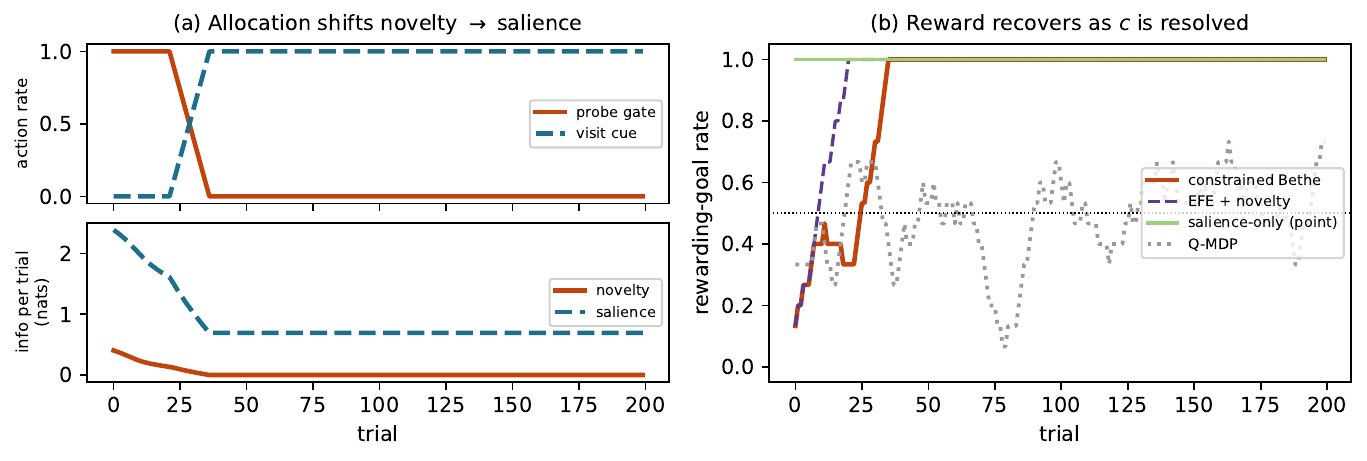}
            \vspace{-15pt}
    \caption{Salience--novelty allocation under learning. \textbf{(a)} The gate-probe rate falls and the cue-visit rate rises across trials (top), tracked by novelty and salience (bottom). \textbf{(b)} The reward rate recovers to $1$ as the agent stops gambling; EFE+novelty planner shifts earlier, the salience-only agent is always correct, and Q-MDP is at chance.}
    \label{fig:gatecue}
    \vspace{-5pt}
\end{figure}

\section{Discussion} \label{sec:discussion}
EFE-based planning can be cast as variational inference on a model augmented with preference and epistemic priors~\cite{de2025expected}. Epistemic priors realise information-seeking as a further model factor with a fixed weight, keeping the objective an ordinary variational free energy at the price of folding the epistemic drive into the agent's generative model. We realise it as an inequality constraint with a solved dual: the drive switches off, binds, or saturates by complementary slackness (\autoref{lem:gamma}). Constraints on the Lagrangian are a powerful way to enrich inference algorithms \cite{csenoz2021variational}. Van de Laar et al., for example, cast planning as constrained Bethe minimisation through the same local constraint manipulations, but bound the probability of violating an outcome specification -- effectively a safety requirement on the goal \cite{van2021chance}. Both the information and the chance constraint are examples of encoding desired outcomes of the inference process, which, in our view, should not be viewed as a priori knowledge.


\section{Conclusion} \label{sec:conclusion}

We posed action selection as constrained minimisation of a Bethe free energy on a goal-augmented planning graph. A single information constraint per step, a lower bound on the expected mutual information with a Karush--Kuhn--Tucker multiplier that is solved rather than tuned, yields a one-parameter family of policies containing the expected-free-energy policy exactly at unit multiplier. The constrained Bethe agent matches a standard expected-free-energy planner to machine precision on the T-maze, detours to an off-path cue that an exploit-only Q-MDP baseline ignores, and reallocates its effort from novelty- to salience-seeking as it learns an unknown transition.

\begin{credits}
 \subsubsection{\ackname} This work is supported by the Sector Plan Techniek of the Ministry of Education,
Culture and Science of the Netherlands (OCW).

 \subsubsection{\discintname}
The author has no competing interests to declare that are
 relevant to the content of this article.
\end{credits}

\bibliographystyle{splncs04}
\bibliography{references}

@book{boyd2004convex,
  title={Convex optimization},
  author={Boyd, Stephen and Vandenberghe, Lieven},
  year={2004},
  publisher={Cambridge University Press}
}

@article{champion2021realizing,
  title={Realizing active inference in variational message passing: The outcome-blind certainty seeker},
  author={Champion, Th{\'e}ophile and Grze{\'s}, Marek and Bowman, Howard},
  journal={Neural Computation},
  volume={33},
  number={10},
  pages={2762--2826},
  year={2021},
  publisher={MIT Press}
}

@book{cover1999elements,
  title={Elements of information theory},
  author={Cover, Thomas M},
  year={1999},
  publisher={John Wiley \& Sons}
}

@article{csenoz2021variational,
  title={Variational message passing and local constraint manipulation in factor graphs},
  author={{\c{S}}en{\"o}z, {\.I}smail and van de Laar, Thijs and Bagaev, Dmitry and de Vries, Bert},
  journal={Entropy},
  volume={23},
  number={7},
  year={2021},
  publisher={MDPI}
}

@article{da2020active,
  title={Active inference on discrete state-spaces: A synthesis},
  author={Da Costa, Lancelot and Parr, Thomas and Sajid, Noor and Veselic, Sebastijan and Neacsu, Victorita and Friston, Karl},
  journal={Journal of Mathematical Psychology},
  volume={99},
  pages={102447},
  year={2020},
  publisher={Elsevier}
}

@article{de2017factor,
  title={A factor graph description of deep temporal active inference},
  author={De Vries, Bert and Friston, Karl J},
  journal={Frontiers in Computational Neuroscience},
  volume={11},
  number={95},
  year={2017},
}

@article{de2025expected,
  title={Expected Free Energy-based Planning as Variational Inference},
  author={de Vries, Bert and Nuijten, Wouter and van de Laar, Thijs and Kouw, Wouter and Adamiat, Sepideh and Nisslbeck, Tim and Lukashchuk, Mykola and Nguyen, Hoang Minh Huu and Araya, Marco Hidalgo and Tresor, Raphael and Jenneskens, Thijs and Nikoloska, Ivana and Ganapathy
Subramanian, Raaja and van Erp, Bart and Bagaev, Dmitry and Podusenko, Albert},
  journal={arXiv:2504.14898},
  year={2025}
}

@article{friston2025pixels,
  title={From pixels to planning: scale-free active inference},
  author={Friston, Karl and Heins, Conor and Verbelen, Tim and Da Costa, Lancelot and Salvatori, Tommaso and Markovic, Dimitrije and Tschantz, Alexander and Koudahl, Magnus and Buckley, Christopher and Parr, Thomas},
  journal={Frontiers in Network Physiology},
  volume={5},
  pages={1521963},
  year={2025},
  publisher={Frontiers Media SA}
}

@inproceedings{kouw2025message,
  title={Message passing-based inference in an autoregressive active inference agent},
  author={Kouw, Wouter M and Nisslbeck, Tim N and Nuijten, Wouter LN},
  booktitle={International Workshop on Active Inference},
  pages={285--298},
  year={2025},
  organization={Springer}
}

@article{ladosz2022exploration,
  title={Exploration in deep reinforcement learning: A survey},
  author={Ladosz, Pawel and Weng, Lilian and Kim, Minwoo and Oh, Hyondong},
  journal={Information Fusion},
  volume={85},
  pages={1--22},
  year={2022},
  publisher={Elsevier}
}

@article{loeliger2007factor,
  title={The factor graph approach to model-based signal processing},
  author={Loeliger, Hans-Andrea and Dauwels, Justin and Hu, Junli and Korl, Sascha and Ping, Li and Kschischang, Frank R},
  journal={Proceedings of the IEEE},
  volume={95},
  number={6},
  pages={1295--1322},
  year={2007},
  publisher={IEEE}
}

@inproceedings{nuijten2025message,
  title={A Message Passing Realization of Expected Free Energy Minimization},
  author={Nuijten, Wouter WL and Lukashchuk, Mykola and van de Laar, Thijs and de Vries, Bert},
  booktitle={International Workshop on Active Inference},
  pages={75-98},
  year={2025},
}

@book{parr2022active,
  title={Active inference: the free energy principle in mind, brain, and behavior},
  author={Parr, Thomas and Pezzulo, Giovanni and Friston, Karl J},
  year={2022},
  publisher={MIT Press}
}

@book{pearl1988probabilistic,
  title={Probabilistic reasoning in intelligent systems: networks of plausible inference},
  author={Pearl, Judea},
  year={1988},
  publisher={Elsevier}
}

@article{van2021chance,
  title={Chance-constrained active inference},
  author={van de Laar, Thijs and {\c{S}}en{\"o}z, {\.I}smail and {\"O}z{\c{c}}elikkale, Ay{\c{c}}a and Wymeersch, Henk},
  journal={Neural Computation},
  volume={33},
  number={10},
  pages={2710--2735},
  year={2021},
  publisher={MIT Press}
}

@article{van2022active,
  title={Active Inference and Epistemic Value in Graphical Models},
  author={van de Laar, Thijs and Koudahl, Magnus and van Erp, Bart and de Vries, Bert},
  journal={Frontiers in Robotics and AI},
  volume={9},
  year={2022},
  publisher={Frontiers}
}

@article{van2024realizing,
  title={Realizing Synthetic Active Inference Agents, Part {II}: Variational Message Updates},
  author={van de Laar, Thijs and Koudahl, Magnus and de Vries, Bert},
  journal={Neural Computation},
  volume={37},
  number={1},
  pages={38--75},
  year={2024},
  publisher={MIT Press}
}

@article{yedidia2005constructing,
  title={Constructing free-energy approximations and generalized belief propagation algorithms},
  author={Yedidia, Jonathan S and Freeman, William T and Weiss, Yair},
  journal={IEEE Transactions on Information Theory},
  volume={51},
  number={7},
  pages={2282--2312},
  year={2005}
}

@article{zhang2021unifying,
  title={Unifying message passing algorithms under the framework of constrained {Bethe} free energy minimization},
  author={Zhang, Dan and Song, Xiaohang and Wang, Wenjin and Fettweis, Gerhard and Gao, Xiqi},
  journal={IEEE Transactions on Wireless Communications},
  volume={20},
  number={7},
  pages={4144--4158},
  year={2021},
  publisher={IEEE}
}

@inproceedings{littman1995learning,
  title={Learning policies for partially observable environments: Scaling up},
  author={Littman, Michael L. and Cassandra, Anthony R. and Kaelbling, Leslie Pack},
  booktitle={Machine Learning Proceedings 1995},
  pages={362--370},
  year={1995},
  publisher={Elsevier}
}

@article{schwobel2018active,
  title={Active inference, belief propagation, and the {B}ethe approximation},
  author={Schw{\"o}bel, Sarah and Kiebel, Stefan and Markovi{\'c}, Dimitrije},
  journal={Neural Computation},
  volume={30},
  number={9},
  pages={2530--2567},
  year={2018},
  publisher={MIT Press}
}

\appendix
\section[Appendix: Proofs]{Appendix: proofs of theoretical results}\label{appx:derivation}

\paragraph{Planning graph and constrained family.}
Let $V$ denote the set of cluster (factor) nodes in the planning graph (\autoref{fig:ffg-planning}) and $E$ the set of edges (shared variables). For~\eqref{eq:generative-planning-model} the cluster set is
\begin{equation}
    V \;=\; \{x_t,\,A,\,B\} \,\cup\, \{\sfx_k,\,\sfy_k,\,\sfu_k,\,\sfg_k\}_{k\in\tau}\,,
\end{equation}
where each composite cluster is named, in sans serif, after the variable it emits -- $\sfx_k$ for the transition, $\sfy_k$ for the emission, $\sfu_k$ for the control, and $\sfg_k$ for the goal cluster at time $k$ -- while the singleton prior clusters are named by their variables. 
The edge set is $E = \{A,B\}\cup\{x_k\}_{k\in\{t\}\cup\tau}\cup\{u_k,y_k\}_{k\in\tau}$, with degrees as in \autoref{eq:bethe-def}.

The form constraints of \autoref{def:form-constraint} are local constraint manipulations in the sense of Şenöz et al.~\cite{csenoz2021variational}. Fixing $q(x_t), q(B), q(A)$ is a structured mean-field factorisation between the parameters and the trajectory with the parameter beliefs held at their past-graph values. Joining the control clusters $\{\sfu_k\}_{k\in\tau}$ into a single policy cluster with belief $q(u_\tau)$ is a cluster (region) choice, and pinning the rollout conditionals is a form constraint on the conditional beliefs. Jointly they restrict the variational family to
\begin{align}
    &q(y_\tau, x_{t:t+K}, u_\tau, \Theta) = \label{eq:constrained-family} \\
    &\qquad q(u_\tau)\; p(x_t\given\mathcal{D}_t)\, p(B\given\mathcal{D}_t)\, p(A\given\mathcal{D}_t) \prod_{k\in\tau} p(x_k \given x_{k-1}, u_k, B)\, p(y_k \given x_k, A)\,, \nonumber
\end{align}
in which the policy belief $q(u_\tau)$ is the only free factor, and all cluster and edge beliefs of~\eqref{eq:bethe-def} are the corresponding marginals of~\eqref{eq:constrained-family}.
	
\subsection{Proof of \autoref{lem:collapse}} \label{app:proof_lem1}
\begin{proof}
Conditional on $u_\tau$ and with the parameter beliefs fixed, the family~\eqref{eq:constrained-family} factorises along a tree: the parameter-sharing equality chains of \autoref{fig:ffg-planning} enter only through the fixed beliefs $q(B), q(A)$, which severs every cycle. On a tree the Bethe entropy decomposition is exact~\cite{yedidia2005constructing}, so the Bethe free energy~\eqref{eq:bethe-def}, evaluated on the marginals of~\eqref{eq:constrained-family} with the controls collected in the policy cluster, equals the variational free energy of the joint,
\begin{equation}
    \mathcal{B}[q] \;=\; \E_{q}\Big[\ln \frac{q(y_\tau, x_{t:t+K}, u_\tau, \Theta)}{p(y_{\tau}, u_\tau, x_{\tau}, x_t, B, A \given \mathcal{D}_t)}\Big]\,.
\end{equation}
Substituting~\eqref{eq:constrained-family} and the goal-augmented model~\eqref{eq:generative-planning-model}, every factor of~\eqref{eq:constrained-family} cancels against its counterpart in $p$ -- the priors $p(x_t\given\mathcal{D}_t), p(B\given\mathcal{D}_t), p(A\given\mathcal{D}_t)$ exactly, and the pinned transition and emission conditionals pairwise -- leaving
\begin{equation}
    \ln \frac{q(\cdot)}{p(\cdot)} \;=\; \ln \frac{q(u_\tau)}{p(u_\tau)} \;-\; \sum_{k\in\tau} \ln p_*(y_k)\,,
\end{equation}
up to the additive normalisation constant of $p$. Taking the expectation under~\eqref{eq:constrained-family} yields~\eqref{eq:bethe-collapse}. \qed
\end{proof}

\subsection{Proof of \autoref{thm:policy}} \label{app:proof_thm1}
\begin{proof}
The form constraints make $\sum_{u_\tau} q(u_\tau)\, I\big[x_k,\Theta;\,y_k\given u_\tau \big]$ linear in $q(u_\tau)$ with fixed coefficients.
By \autoref{lem:collapse}, the variation of the Lagrangian~\eqref{eq:lagrangian} with respect to $q(u_\tau)$ is
\begin{equation}
\begin{split}
	\frac{\delta\mathcal{L}}{\delta q(u_\tau)}=&\,  \ln q(u_\tau) + 1 - \ln p(u_\tau) + \lambda \\
	&-  \sum_{k\in\tau} \Big(\E_{q(y_k | u_\tau)}\big[ \ln p_*(y_k)\big] + \gamma_k I\big[x_k,\Theta;\,y_k | u_\tau\big]\Big) .
\end{split}
\end{equation}
Setting it to $0$, solving for $\ln q(u_\tau)$ and fixing $\lambda$ by normalisation gives~\eqref{eq:policy-star}. The control marginals follow by marginalising the policy cluster. The signs and feasibility of the duals $\gamma_k \ge 0$ are the Karush--Kuhn--Tucker conditions for the inequalities~\eqref{eq:info-constraint}, resolved by \autoref{lem:gamma} below. \qed
\end{proof}

\subsection{Proof of \autoref{prop:efe}} \label{app:proof_prop1}
\begin{proof}
Per time step in the planning horizon, the identity~\eqref{eq:efe-identity} evaluated under the rollout reads $-G(u_k \given u_\tau) = \E_{q(y_k\given u_\tau)}\big[\ln p_*(y_k) \big] + I\big[x_k,\Theta;\,y_k\given u_\tau \big]$. Setting $\gamma_k = 1$ for all $k\in\tau$ in~\eqref{eq:policy-star} therefore gives
\begin{align} 
q^\star(u_\tau) &\propto p(u_\tau)\exp\Big(\sum_k -G(u_k\given u_\tau)\Big) \\
&= p(u_\tau)\exp\big(-G(u_\tau)\big) \, ,
\end{align} 
which is $q_{\mathrm{EFE}}(u_\tau)$, the solution under the EFE functional. \qed
\end{proof}

\paragraph{Determining the dual.}
As described in \autoref{sec:implementation}, the implementation aggregates the per-step constraints into a single horizon constraint. Writing $S(u_\tau) = \sum_{k\in\tau} I\big[x_k,\Theta;\,y_k \given u_\tau\big]$ for the information a rollout supplies, that constraint reads $\E_{q(u_\tau)}\big[S(u_\tau)\big] \ge \beta = \sum_{k\in\tau}\beta_k$, with one multiplier $\gamma$ shared across the horizon, which preserves the Karush--Kuhn--Tucker structure with a single dual variable. The following lemma characterises its solution. The per-step system is the same object in $K$ dimensions: writing
\[
    \ln Z(\gamma_1,\dots,\gamma_K) \;=\; \ln \sum_{u_\tau} \bar{p}(u_\tau)\,\exp\Big(\sum_{k\in\tau} \gamma_k\, I\big[x_k,\Theta;\,y_k \given u_\tau\big]\Big) \, ,
\]
the stationarity conditions of the per-step problem read $\nabla \ln Z = (\beta_1,\dots,\beta_K)$, whose Jacobian is the covariance matrix of the per-step information terms under $q^\star$ and is therefore positive semi-definite. So $\ln Z$ is convex and $\nabla \ln Z$ is monotone, and the lemma below is its restriction to the diagonal $\gamma_k \equiv \gamma$, along which the Jacobian contracts to the scalar variance in~\eqref{eq:gamma-derivative}.

\begin{lemma}[Information multiplier]\label{lem:gamma}
Let the goal-tilted base measure be
\begin{equation}\label{eq:base-measure}
\bar{p}(u_\tau) \propto p(u_\tau)\exp\Big(\sum_k \E_{q(y_k|u_\tau)}\big[\ln p_*(y_k) \big] \Big) \, ,
\end{equation}
and let
\begin{equation}
q^\star(u_\tau;\gamma) \propto \bar{p}(u_\tau)\exp\big(\gamma\, S(u_\tau) \big)
\end{equation}
be the measure that depends on the shared multiplier, with $S$ as above. The expected information gain $\E_{q^\star(u_\tau;\gamma)}\big[S \big]$ is non-decreasing in $\gamma$, rising from $\E_{\bar{p}}\big[S \big]$ at $\gamma=0$ to $\max_{u_\tau} S(u_\tau)$ as $\gamma\to\infty$.

The Karush--Kuhn--Tucker conditions admit exactly three cases:
\begin{enumerate}
\item if $\E_{\bar{p}}\big[S \big]\ge\beta$ the constraint is slack and $\gamma=0$;
\item if $\E_{\bar{p}}\big[S \big]<\beta\;<\;\max_{u_\tau} S(u_\tau)$ the binding condition
\begin{equation}\label{eq:gamma-condition}
    \E_{q^\star(u_\tau;\gamma)}\big[S(u_\tau)\big] \;=\; \beta
\end{equation}
has a unique root $\gamma>0$;
\item if $\beta\;\ge\;\max_{u_\tau} S(u_\tau)$ no policy can clear the floor and $\gamma$ saturates at the bracket boundary $\gamma_{\max}$, concentrating the policy on the most informative rollouts.
\end{enumerate}
\end{lemma}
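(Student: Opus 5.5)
The plan is to treat $q^\star(\cdot;\gamma)$ as a one-parameter exponential family with base measure $\bar p$ and sufficient statistic $S$, and to read every claim off its log-partition function. Define
\[
\ln Z(\gamma) \;=\; \ln \sum_{u_\tau} \bar p(u_\tau)\exp\big(\gamma S(u_\tau)\big).
\]
The policy space is finite, so $\ln Z$ is smooth on $\mathbb{R}$.

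\textbf{Monotonicity.} Differentiating under the finite sum gives
\[
\frac{d}{d\gamma}\ln Z = \E_{q^\star(u_\tau;\gamma)}[S], \qquad \frac{d}{d\gamma}\E_{q^\star(u_\tau;\gamma)}[S] = \mathrm{Var}_{q^\star(u_\tau;\gamma)}[S] \;\ge\; 0 .
\]
This is the scalar restriction of the covariance Jacobian noted just before the lemma. It shows that $\E_{q^\star}[S]$ is non-decreasing in $\gamma$.

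\textbf{Endpoints.} At $\gamma=0$ we have $q^\star=\bar p$, so the value is $\E_{\bar p}[S]$. For $\gamma\to\infty$, set $S^\ast=\max_{u_\tau} S(u_\tau)$ and divide numerator and denominator by $e^{\gamma S^\ast}$. Every non-maximising rollout then carries weight $O\big(e^{-\gamma\delta}\big)$, where $\delta>0$ is the gap to the second-largest value of $S$. Hence $q^\star$ concentrates on $\arg\max S$ and $\E_{q^\star}[S]\to S^\ast$. This uses the fact that $\bar p$ has full support, since $p(u_\tau)>0$ and the exponential factor is positive; I will state this assumption explicitly.

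\textbf{Strict monotonicity.} Uniqueness in case 2 needs strict increase. If $S$ is constant on the support of $\bar p$, then $\E_{\bar p}[S]=S^\ast$ and case 2 is empty. Otherwise $S$ is non-constant under the full-support measure $q^\star(\cdot;\gamma)$ for every finite $\gamma$. The variance is then strictly positive, so $\gamma\mapsto\E_{q^\star}[S]$ is a continuous, strictly increasing map of $[0,\infty)$ onto $[\E_{\bar p}[S],S^\ast)$. I expect this degenerate-case bookkeeping to be the main obstacle, though only a minor one.

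\textbf{KKT cases.} The KKT system consists of stationarity (\autoref{thm:policy}, aggregated with a shared $\gamma$), primal feasibility $\E_{q^\star}[S]\ge\beta$, dual feasibility $\gamma\ge0$, and complementary slackness $\gamma\,(\beta-\E_{q^\star}[S])=0$. I split on the position of $\beta$.
\begin{enumerate}
\item If $\E_{\bar p}[S]\ge\beta$, then $\gamma=0$ is feasible and satisfies slackness. No $\gamma>0$ can satisfy slackness unless $\E_{q^\star}[S]=\beta$. By strict monotonicity that equality fails for $\gamma>0$ except in the constant-$S$ case, where $\gamma$ is irrelevant anyway.
\item If $\E_{\bar p}[S]<\beta<S^\ast$, then $\gamma=0$ is infeasible, so the constraint must bind. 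The intermediate value theorem on the continuous bijection above gives a root of \eqref{eq:gamma-condition}, and strict monotonicity makes it unique.
\item If $\beta\ge S^\ast$, then $\E_{q^\star}[S]<S^\ast\le\beta$ for every finite $\gamma$, so the KKT system has no finite solution. The limit computation shows that the supremum is approached only as $\gamma\to\infty$. Truncating at $\gamma_{\max}$ yields the capped multiplier and a policy concentrated on the most informative rollouts.
\end{enumerate}

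Two caveats are worth recording. Case 3 is not a KKT point in the strict sense, because the problem is infeasible. The lemma's wording, which says $\gamma$ ``saturates at the bracket boundary,'' should therefore be read as describing the implementation convention rather than as a KKT solution. I would also note the boundary subtlety at $\beta=S^\ast$: the value is attained only in the limit, which is why this equality is grouped with case 3.
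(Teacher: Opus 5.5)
Your proposal is correct and follows the paper's proof essentially step for step. Both arguments treat $q^\star(\cdot;\gamma)$ as an exponential family with log-partition $\ln Z$ and read monotonicity off the cumulant identity $\mathrm{d}g/\mathrm{d}\gamma=\mathrm{Var}[S]\ge 0$. Both obtain the $\gamma\to\infty$ endpoint by concentration on the maximisers of $S$, get strictness except when $S$ is constant, and split the KKT system by where $\beta$ falls in the range of $g$. The differences are cosmetic: you assume full support of $\bar p$ explicitly, where the paper instead defines $S_{\max}$ over the support of the control prior, and you additionally check that $\gamma=0$ is the unique multiplier in case 1.
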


\begin{proof}
Let
\begin{equation}\label{eq:log-partition}
    \ln Z(\gamma) \;=\; \ln \sum_{u_\tau} \bar{p}(u_\tau)\,\exp\big(\gamma\, S(u_\tau)\big)
\end{equation}
be its log-partition function, so that $q^\star(u_\tau;\gamma) = \bar{p}(u_\tau)\exp\big(\gamma S(u_\tau) - \ln Z(\gamma)\big)$.
The family $q^\star(u_\tau;\gamma)$ is a one-parameter exponential family in the natural parameter $\gamma$ with sufficient statistic $S$ and base measure $\bar{p}$, so the derivatives of $\ln Z$ return the cumulants of $S$,
\begin{equation}\label{eq:gamma-derivative}
    g(\gamma) \;\equiv\; \frac{\rmd \ln Z}{\rmd\gamma} \;=\; \E_{q^\star(u_\tau;\gamma)}\big[S\big] \, , \qquad \frac{\rmd g}{\rmd\gamma} \;=\; \mathrm{Var}_{q^\star(u_\tau;\gamma)}\big[S\big] \;\ge\; 0\,.
\end{equation}
The expected information gain is thus $g$, non-decreasing because $\ln Z$ is convex. Its endpoints are the ones claimed. At $\gamma = 0$ the tilt is absent, leaving $g(0) = \E_{\bar{p}}[S]$. For the other endpoint let $S_{\max}$ be the largest value $S$ takes on the support of $\bar{p}$ -- which is the support of the control prior, the goal tilt in~\eqref{eq:base-measure} being strictly positive -- and let $\mathcal{U}^\star$ collect the rollouts attaining it. For $u_\tau \notin \mathcal{U}^\star$ and $u^\star_\tau \in \mathcal{U}^\star$,
\begin{equation}
    \frac{q^\star(u_\tau;\gamma)}{q^\star(u^\star_\tau;\gamma)} \;=\; \frac{\bar{p}(u_\tau)}{\bar{p}(u^\star_\tau)}\,\exp\big(\!-\gamma\,[S_{\max} - S(u_\tau)]\big) \;\longrightarrow\; 0 \quad \text{as } \gamma\to\infty\,,
\end{equation}
since the bracket is strictly positive. The policy space is finite, so the mass of the finitely many non-maximising rollouts vanishes, $q^\star(\cdot\,;\gamma)$ concentrates on $\mathcal{U}^\star$, and $g(\gamma)\to S_{\max}$.

But monotonicity alone does not yield uniqueness. Strictness does. By~\eqref{eq:gamma-derivative}, $\rmd g/\rmd\gamma$ vanishes only if $S$ is constant on the support of $\bar{p}$. In that degenerate case $\E_{\bar{p}}[S] = S_{\max}$, the interval $\big(\E_{\bar{p}}[S], S_{\max}\big)$ is empty and case 2 cannot arise. Otherwise $\rmd g/\rmd\gamma > 0$ and $g$ is a continuous, strictly increasing bijection from $[0,\infty)$ onto $\big[\E_{\bar{p}}[S], S_{\max}\big)$. Note that $S_{\max}$ itself is approached but never attained at finite $\gamma$.

Besides stationarity, which produces the form $q^\star(u_\tau;\gamma)$, the Karush--Kuhn--Tucker conditions are
\begin{equation}
    g(\gamma) \;\ge\; \beta \, , \qquad \gamma \;\ge\; 0 \, , \qquad \gamma\,\big[\beta - g(\gamma)\big] \;=\; 0 \, ,
\end{equation}
that is, primal feasibility, dual feasibility and complementary slackness. The three cases are the trichotomy of $\beta$ against the range of $g$.
\begin{enumerate}
\item If $\beta \le g(0)$, then $\gamma = 0$ is already feasible and satisfies slackness. The constraint is inactive and the agent carries no epistemic drive.
\item If $g(0) < \beta < S_{\max}$, then $\gamma = 0$ is infeasible, so slackness forces $g(\gamma) = \beta$, which is~\eqref{eq:gamma-condition}. As $\beta$ lies in the range of $g$ and $g$ is strictly increasing there, this root exists and is unique.
\item If $\beta \ge S_{\max} > g(0)$, then $g(\gamma) < \beta$ at every finite $\gamma$: no policy clears the floor, the feasible set is empty and no Karush--Kuhn--Tucker point exists. The multiplier grows without bound and the implementation truncates it at $\gamma_{\max}$, where $q^\star$ is concentrated on $\mathcal{U}^\star$ -- the maximal information-seeking limit.
\end{enumerate}
\qed
\end{proof}

\end{document}